\documentclass[11pt]{article}


\usepackage{amsmath,amssymb,graphicx, mathtools}

\usepackage{graphics,graphicx,color}         	
\usepackage{epstopdf}  
\usepackage{url}		
\usepackage{hyperref}
\usepackage{cleveref}
\usepackage{amsthm}
\usepackage[margin= 1in]{geometry}

\usepackage{floatrow,subcaption}
\newfloatcommand{capbtabbox}{table}[][0.5\textwidth]

\usepackage{color}  

\frenchspacing  

\usepackage{fancyhdr}

\usepackage[boxed]{algorithm2e} 

\newtheorem{prop}{Proposition}

\newcommand{\xcondh}{\hat{\boldsymbol{x}}_{\text{cond}}(\boldsymbol\theta)}
\newcommand{\condh}{\widehat{\boldsymbol{\Gamma}}_{\text{cond}}(\boldsymbol\theta)}
\newcommand{\condhinv}{\widehat{\boldsymbol{\Gamma}}^{-1}_{\text{cond}}(\boldsymbol\theta)}
\newcommand{\xcond}{\boldsymbol{x}_{\text{cond}}(\boldsymbol\theta)}
\newcommand{\cond}{\boldsymbol{\Gamma}_{\text{cond}}(\boldsymbol\theta)}
\newcommand{\condinv}{\boldsymbol{\Gamma}_{\text{cond}}^{-1}(\boldsymbol\theta)}

\newcommand{\N}{\mathcal{N}}
\newcommand{\prior}{\B{\Gamma}_{\rm prior}}
\newcommand{\eqdef}{\stackrel{def}{=}}

\newcommand{\B}{\boldsymbol}
\newcommand{\ip}[2]{\langle {#1}, {#2} \rangle}
\newcommand{\priorcov}{\B{\Gamma}_{\text{prior}}}
\newcommand{\bb}{\mathbf b}
\newcommand{\dx}{\mathrm{d}}
\newcommand{\mc}[1]{\mathcal{#1}}
\newcommand{\mb}[1]{\mathbb{#1}}

\Crefname{prop}{Proposition}{Propositions}
\Crefname{algocf}{Algorithm}{Algorithms}

\title{Efficient Marginalization-based MCMC Methods for Hierarchical Bayesian Inverse Problems}

\author{Arvind K. Saibaba\thanks{Department of Mathematics, North Carolina State University, Raleigh, NC
    asaibab@ncsu.edu} \and
 Johnathan Bardsley\thanks{Department of Mathematical Sciences, University of Montana, Missoula, MT bardsleyj@mso.umt.edu} \and
 D. Andrew Brown\thanks{School of Mathematical and Statistical Sciences, Clemson University, Clemson, SC ab7@clemson.edu} \and
Alen Alexanderian\thanks{Department of Mathematics, North Carolina State University, Raleigh, NC alexanderian@ncsu.edu}
}  


\begin{document}
\maketitle

\begin{abstract}
Hierarchical models in Bayesian inverse problems are characterized by an assumed prior probability distribution for the unknown state and measurement error precision, and hyper-priors for the prior parameters. Combining these probability models using Bayes' law often yields a posterior distribution that cannot be sampled from directly, even for a linear model with Gaussian measurement error and Gaussian prior, {both of which we assume in this paper. In such cases, Gibbs sampling can be used to sample from the posterior \cite{Bar11}}, but problems arise when the dimension of the state is large. This is because the Gaussian sample {required for} each iteration can be prohibitively expensive to compute, and because the statistical efficiency of the Markov chain degrades as the dimension of the state increases. The latter problem can be mitigated using marginalization-based techniques, such as those found in \cite{FoxNor,joyce2018point,RueHel}, but these can be computationally prohibitive as well. In this paper, we combine the low-rank techniques of \cite{brown2016computational} with the marginalization approach of \cite{RueHel}. We consider two variants of this approach: delayed acceptance and pseudo-marginalization. We provide a detailed analysis of the acceptance rates and computational costs associated with our proposed algorithms, and compare their performances on two numerical test cases---image deblurring and inverse heat equation.\end{abstract}

\section{Introduction}
Inverse problems arise in a wide range of applications and typically involve
estimating unknown parameters in a physical model from noisy, indirect
measurements.
{In the applications that we are interested in, the unknown parameter
vector results from the discretization of a continuous function defined on the
computational domain and hence is high-dimensional. Additionally, the inverse
problem of recovering the unknown parameters from the data is
\textit{ill-posed}: a solution may not exist, may not be unique, or may be
sensitive to the noise in the data.}

To set the context of our work and some notation, we consider a discrete
measurement error model of the form
\begin{equation}
\label{Ax=b}
\B{b}=\B{A}\B{x}+\B{\epsilon},\quad \B{\epsilon}\sim\N(\B{0},\mu^{-1}\B{I}_M),
\end{equation}
where $\B{b}\in\mathbb{R}^M$ denotes the measurements, $\B{A}\in\mathbb{R}^{M\times N}$ is the discretized forward model, $\B{x}\in\mathbb{R}^N$ is the unknown estimand, and $\mu > 0$ is the error precision. The inverse problem here seeks to recover the unknown $\B{x}$ from the measurement $\B{b}$.

The statistical model~\cref{Ax=b} implies that
the probability density function for $\B{b}$ is given by
\begin{equation}
\label{data probability}
\pi(\B{b}\mid\B{x},\mu)\propto {\mu}^{M/2}\exp\left(-\frac{\mu}{2}\Vert\B{A}\B{x}-\B{b}\Vert^2_2\right),
\end{equation}
where `$\propto$' denotes proportionality.
To address the ill-posedness, we also assume a Gaussian prior probability density function (akin to choosing a quadratic regularization function) of the form
\begin{equation}
\label{prior}
\pi(\B{x}\mid\sigma)\propto \sigma^{N/2}\exp\left(-\frac\sigma2\B{x}^\top\prior^{-1}\B{x}\right).
\end{equation}
{For convenience, we take the prior to have zero mean, but a nonzero mean can also be easily incorporated into our framework.}
Then, if we define $\B{\theta}=(\mu,\sigma)$, through {\em Bayes' law} we
obtain the {\em posterior density function} of $\B{x}$ conditioned on $\B{b}$
and $\B{\theta}$:
\begin{equation}\label{BayesLaw}
\begin{aligned}
\pi(\B{x}\mid\B{b},\B{\theta}) \propto& \> \pi(\B{b}\mid\B{x},\mu)\pi(\B{x}\mid\sigma)\\
 \propto& \>\mu^{M/2}\sigma^{N/2}\exp\left(-\frac\mu2\Vert\B{A}\B{x}-\B{b}\Vert^2_2-\frac\sigma2\B{x}^\top\prior^{-1}\B{x}\right).
\end{aligned}
\end{equation}
The maximizer of $\pi(\B{x}\mid\B{b},\B{\theta})$ is known as the maximum a
posteriori (MAP) estimator, which is also the minimizer of $-\ln
\pi(\B{x}\mid\B{b},\B{\theta})$. This, in turn, has the form of a Tikhonov
regularization, thus establishing a connection between Bayesian and classical
inverse problems.

The scaling terms in~\cref{BayesLaw} involving $\mu$ and $\sigma$ arise from
the normalizing constant of the Gaussian measurement error and prior models. In
the hierarchical Bayesian approach we will also treat $\B{\theta}=(\mu,\sigma)$
as unknown. The {\em a priori} uncertainty about plausible values of $\B\theta$
is quantified in the prior distribution with density $\pi(\B{\theta})$. For
convenience, we assume that the precision parameters $\mu$ and $\sigma$ are
independent and Gamma-distributed. Specifically, we assume
\begin{equation}
\label{hyperprior}
\pi(\B{\theta})=\pi(\mu)\pi(\sigma)\propto \mu^{\alpha_\mu-1}\sigma^{\alpha_\sigma-1}\exp(-\beta_\mu\mu-\beta_\sigma\sigma).\end{equation}
Other choices are also possible and are discussed in~\cite{Gelman06, ScottBerger06, brown2016computational} and elsewhere. Applying Bayes' law again yields a posterior density function of $\B{x}$ and $\B{\theta}$ conditioned on $\B{b}$:
\begin{equation}\label{eqn:post} \pi(\B{x},\B{\theta}\mid\B{b})\propto \pi(\B{x}\mid\B{b},\B{\theta})\pi(\B{\theta}).\end{equation}
The marginal distribution $\pi(\B{\theta}\mid \B{b})$ can be derived by
integrating over $\B{x}$; i.e.,  $\pi(\B{\theta}\mid \B{b}) =
\int_{\mathbb{R}^n}\pi(\B{x}, \B{\theta}\mid \B{b})d\B{x}$. Explicit
expressions for these distributions are provided in~\cref{sec:review}.

The focus of this paper is on the {problem of sampling from the posterior distribution} $\pi(\B{x},\B{\theta}\mid\B{b})$. In general, the full-posterior is not Gaussian and to explore this distribution the prevalent approach is to use Markov chain Monte Carlo (MCMC) methods \cite{metropolis1953equation,hastings1970monte,geman1993stochastic,tierney1994markov,GelfandSmith90}. { For a more comprehensive review of MCMC methods, please refer to \cite{brooks2011handbook}.} Several MCMC algorithms have been treated in the context of inverse problems in recent literature, which we briefly review. Specifically, in \cite{Bar11}, conjugacy relationships are exploited to define a Gibbs sampler, in which samples from the conditional densities $\pi(\B{\theta}\mid\B{b},\B{x})$ and $\pi(\B{x}\mid\B{b},\B{\theta})$ (which are Gamma- and Gaussian-distributed, respectively) are cyclically computed. { The computational cost of this Gibbs sampler is prohibitive for $N$ sufficiently large. This is due to the fact that as $N\rightarrow\infty$, the {\em integrated autocorrelation time} of the MCMC chain also tends to $\infty$, meaning that the number of Gibbs samples must increase with $N$; see \cite{AgaBarPapStu} for details. And second, computing samples from $\pi(\B{x}\mid\B{b},\B{\theta})$ requires the solution of an $N\times N$ linear system, and hence the computational cost of the individual Gibbs samples also increases with $N$.} 

{To address the first computational issue, i.e., that the correlation in the MCMC chain increases with $N$, an alternative to the Gibbs sampler is presented in \cite{RueHel}, where a proposed state $(\B{x}_*,\B{\theta}_*)$ is computed in two stages by first drawing $\B{\theta}_*$ from a proposal distribution and then drawing $\B{x}_* \sim \pi(\B{x} \mid \B{\theta}_*, \B{b})$.} The proposal $(\B{x}_*,\B{\theta}_*)$ is accepted or rejected jointly using a Metropolis-Hastings {step to obtain an approximate draw} from the posterior $\pi(\B{x},\B{\theta}\mid\B{b})$. This approach, called the {\em one-block} algorithm \cite{RueHel}, does not have the same degeneracy issues as the Gibbs sampler as $N\rightarrow\infty$. However, it can be expensive to implement when evaluating $\pi(\B{\theta}\mid\B{b})$ is computationally demanding and one still has to compute a sample from $\pi(\B{x}\mid\B\theta_\ast,\B{b})$ at every iteration.

{To address the second computational issue, i.e., that the cost of computing samples from $\pi(\B{x}\mid\B{b},\B{\theta})$  increases with $N$, we implement the approach taken in~\cite{brown2016computational}, where a low-rank approximation of the so-called prior preconditioned data misfit part of the Hessian is used.} This low-rank representation allows efficient sampling from the conditional distribution, reducing the overall computational cost. { When the forward operator is defined using partial differential equations, computing the conditional covariance matrix once may require hundreds of thousands of PDE solves; in the context of an MCMC algorithm which requires repeated computation of the conditional covariance, this can be prohibitively expensive. Even when the conditional covariance can be formed, storing it requires $\mc{O}(N^2)$ in memory and $\mc{O}(N^3)$ in computational cost, which is infeasible when $N$ is large (e.g., $O(10^5)$).} When the forward operator $\B{A}$ and the prior covariance matrix are diagonalized by the Fourier transform, algorithms such as the Fast Fourier Transform can be used effectively~\cite{bardsley2013efficient}. Other methods based on Krylov subspace solvers, e.g., Conjugate Gradient, have also been developed~\cite{gilavert2015efficient}. All of these approaches still suffer from the degeneracy issue as $N\rightarrow \infty$.

The contributions of this paper are as follows. {First, to tackle
the two drawbacks of the Gibbs sampler, as described above, we combine the
approaches of the previous two paragraphs, which to our knowledge has not been
done elsewhere. The use of a low-rank approximation defines an approximate
posterior density function $\hat\pi(\B{x},\B{\theta}\mid\B{b})$, whose samples
are only approximate and thus must be embedded within a
Metropolis-Hastings or importance sampling framework.} In our first algorithm,
which we call {\em approximate one-block}, $\hat\pi(\B{x},\B{\theta}\mid\B{b})$
is used as a proposal for Metropolis-Hastings, with the proposal samples
computed using the one-block algorithm. We propose two other variants of the
one-block algorithm that make use of $\hat\pi(\B{x},\B{\theta}\mid\B{b})$.
Specifically, we embed one-block applied to
$\hat\pi(\B{x},\B{\theta}\mid\B{b})$ within both the {\em delayed acceptance}
\cite{ChrFox} and {\em pseudo-marginal} \cite{PseudoMarg} frameworks to obtain
samples from the full posterior $\pi(\B{x},\B{\theta}\mid \B{b})$. {
Thus the algorithms we propose result from combing the low rank approximation
approach of \cite{brown2016computational} with one of the existing MCMC methods
mentioned above. To increase the novelty of our work, and also to provide the
user with some intuition on how well our algorithms can be expected to perform
in practice, we present theoretical results that provide insight into the
acceptance rates and the performance of the algorithms.} The main take away is
that when the low-rank approximation is sufficiently close (in a sense that we
make precise in~\ref{sec:analysis}), the algorithms have similar behavior to
the one-block algorithm.

{Hierarchical Bayesian approaches have been applied to inverse
problems in various other works, going back to \cite{KaiSom05} and more
recently in \cite{CalSom07,CalSom08}. However, in those works the posterior
density function $\pi(\B{x},\B{\theta}\mid \B{b})$ is maximized, yielding the
MAP estimator, whereas in this paper we want to perform uncertainty
quantification (UQ), and so need to compute samples from the posterior.
Sample-based methods for inverse problems first appear in the works
\cite{NicFox,KaiKolSomVau00}. In recent years, MCMC methods for Bayesian
inverse problems has become an active field, with some recent advances
including gradient and Hessian-based MCMC methods \cite{MarWilBurGha,PetraMartinStadlerEtAl14},
likelihood-informed MCMC methods \cite{CuiLawMar}, and transport map
accelerated MCMC methods \cite{ParMar}. In the Bayesian statistics literature,
MCMC methods for hierarchial models of the type considered here are standard;
see, e.g., \cite{GelCarSteRub04}. Moreover, the Gibbs sampler of \cite{Bar11}
for inverse problems is used in the context of spatio-temporal models in
\cite{Higdon}. Some properties of this Gibbs sampler are derived in
\cite{AgaBarPapStu}, and various extensions are presented in
\cite{brown2016computational,FoxNor,joyce2018point}, which have improved
convergence properties and/or improved computational efficiency. The algorithms
presented in this paper fit within this last group of MCMC methods.}

The paper is organized as follows. In~\cref{sec:review}, we present the
hierarchical Gibbs sampler and discuss the infinite dimensional limit,
presenting the result of \cite{AgaBarPapStu} showing the degeneracy of
hierarchical Gibbs as $N\rightarrow\infty$.  We then present the one-block
algorithm of \cite{RueHel}, which does not have the same degeneracy issues.
In~\cref{sec:alg}, we present the new algorithms making only limited
assumptions regarding the approximate full posterior distribution.
In~\cref{sec:analysis}, we provide a theoretical analysis of our proposed
algorithms, making explicit use of low rank structure.  The numerical
experiments in~\cref{sec:num} include a model 1D deblurring problem which
allows us to compare and contrast the various algorithms and a PDE-based
example of inverse heat equation that demonstrates the computational benefits
of our approaches. We summarize our work and discuss future research
in~\cref{sec:conclusions}. 

\section{Review of MCMC Algorithms}
\label{sec:review}
In this section, we present two MCMC algorithms for background. The first method is known as hierarchical Gibbs and is standard. Its convergence characteristics
serve as motivation for the second algorithm, which is known as one-block.

\subsection{The Hierarchical Gibbs Sampler}
Under our assumed model, we provide explicit expressions for the posterior and the marginal distribution. To obtain the expression for the posterior distribution, combine~\cref{BayesLaw} with~\cref{hyperprior} via~\cref{eqn:post} to obtain
\[\begin{aligned}
\pi(\B{x},\B{\theta}\mid\B{b})\propto&\> {\pi(\B{x}\mid\B{b},\B{\theta})\pi(\B{\theta})}\nonumber\\
\propto& \> \mu^{M/2}\sigma^{N/2}\pi(\B{\theta})\exp\left(-\frac\mu2\Vert\B{A}\B{x}-\B{b}\Vert^2-\frac\sigma2\B{x}^\top\prior^{-1}\B{x}\right),\nonumber\\
\propto & \>\mu^{M/2}\sigma^{N/2}\pi(\B{\theta}) \exp\left(-\frac{\mu}{2}\B{b}^\top \B{b} +\mu\B{b}^\top\B{Ax} -\frac{1}{2}\B{x}^\top\cond\B{x} \right) \nonumber\\
 =& \> \mu^{M/2}\sigma^{N/2}\pi(\B{\theta}) \exp\left(-\frac{\mu}{2}\B{b}^\top \B{b} + \frac{\mu^2}{2}\B{b}^\top\B{A}\cond\B{A}^\top \B{b} \right) \times \label{e_post2}\\
&  \qquad \qquad {\exp\left(- \frac{1}{2} (\B{x}-\xcond)^T\condinv(\B{x}-\xcond)\right), }\nonumber
\end{aligned}\]
where $\xcond = \mu\cond \B{A}^\top\B{b}$ and $\condinv = \mu\B{A}^\top\B{A} +
\sigma \prior^{-1}.$ It follows that the marginal distribution, $\pi(\B{\theta}\mid \B{b}) = \int_{\mathbb{R}^n}\pi(\B{x}, \B{\theta}\mid \B{b})d\B{x}$, is given by
\begin{equation}
\pi(\B{\theta} \mid \B{b}) \propto  \frac{\mu^{M/2}\sigma^{N/2}\pi(\B{\theta})}{\sqrt{\det(\condinv)}} \exp\left(-\frac{\mu}{2}\B{b}^\top \B{b} + \frac{\mu^2}{2}\B{b}^\top\B{A}\cond\B{A}^\top \B{b}\right). \label{marginal}
\end{equation}
Observe that the joint posterior density satisfies $\pi(\B{x},\B{\theta} \mid \B{b}) = \pi(\B{\theta}\mid  \B{b}) \pi(\B{x}\mid\B{\theta},\B{b})$, where $\B{x} \mid \B{\theta},\B{b} \sim \mathcal{N}(\xcond,\cond)$.

We begin with the hierarchical Gibbs sampler of \cite{Bar11}. Our choice of prior \cref{prior} for $\B{x}$, and the hyper-prior \cref{hyperprior} for $\B{\theta}=(\mu,\sigma)$, respectively, were made with conjugacy relationships in mind \cite{GelCarSteRub04}; i.e., so that the `full conditional' densities have the same form as the corresponding priors:
\begin{align}\label{conditional2}
\pi(\mu\mid\B{x},\sigma,\B{b})\propto& \> \mu^{M/2+\alpha_\mu-1}\exp\left(\left[-\frac{1}{2}\Vert\B{A}\B{x}-\B{b}\Vert^2_2-\beta_\mu\right]\mu\right),\\\label{conditional3}
\pi(\sigma\mid\B{x},\mu,\B{b})\propto& \> \sigma^{N/2+\alpha_\sigma-1}\exp\left(\left[-\frac{1}{2}\B{x}^T\prior^{-1}\B{x}-\beta_\sigma\right]\sigma\right),\\\label{conditional1}
\pi(\B{x}\mid\mu,\sigma,\B{b})\propto&\> \exp\left(-\frac{\mu}{2}\Vert\B{A}\B{x}-\B{b}\Vert^2-\frac{\sigma}{2}\B{x}^T\prior^{-1}\B{x}\right).
\end{align}
Note that~\cref{conditional2,conditional3} are Gamma densities, while~\cref{conditional1} is the density of a Gaussian distribution. \cref{ALG:HierarchicalGibbs} follows immediately from \cref{conditional2,conditional3,conditional1} and is precisely the hierarchical Gibbs sampling algorithm of \cite{Bar11}.

\LinesNumbered
\begin{algorithm}[!ht]
\DontPrintSemicolon
\SetKwInput{Input}{Input}
\SetKwInput{Output}{Output}
	\Input{Set 
$\B{x}_{(0)}=\B{x}_{\text{cond}}(\B{\theta}_{(0)})$, and define $K$ and burn-in period $K_b$.}
	\Output{Approximate samples from the posterior distribution  $\{ \B{x}_{(t)}, \B{\theta}_{(t)} \}_{t=K_b +1}^K$.}
	\BlankLine
  	\For{$t=1$ to $K$}
		{
Compute $\mu_{(t)}\sim \Gamma\left(M/2+\alpha_\mu,\frac{1}{2}\Vert\B{A}\B{x}_{(t-1)}-\B{b}\Vert^2+\beta_\mu\right)$.\;
Compute $\sigma_{(t)}\sim \Gamma\left(N/2+\alpha_\sigma,\frac{1}{2}(\B{x}_{(t-1)})^T\B{\Gamma}_{\rm prior}\B{x}_{(t-1)}+\beta_\sigma\right)$.\;
	Compute $\B{x}_{(t)}\sim \N\left( \B{x}_\text{cond} (\B{\theta}_{(t)}),\B{\Gamma}_{\rm cond}(\B{\theta}_{(t)})\right)$, where $\B{\theta}_{(t)}=(\mu_{(t)},\sigma_{(t)})$.
        }
\caption{Hierarchical Gibbs Sampler}
\label{ALG:HierarchicalGibbs}
\end{algorithm}



The values of $M$ and $N$ are determined by the number of measurements and the size of the numerical mesh, respectively, making the problems discrete. Since $M$ is the dimension of our measurement vector, we assume that it is a fixed value.  However, we are free to choose $N$ as we please, and the behavior of our approaches as $N\to\infty$ is an important question.  In what follows, we briefly discuss the infinite-dimensional limit, pointing the interested reader to the extensive treatments found in \cite{Stuart10,DashtiStuart16} for more details.

Consider the linear inverse problem, which typically arises from the discretization of a Fredholm integral equation of the first-kind, for example
\begin{equation}\label{eqn:fred}
b(s)=\int_{\Omega}a(s;t)x(t)dt,\quad s\in\Omega,
\end{equation}
where $b$ is the model output function, $\Omega$ is the computational domain, $a$ is the integral kernel or point spread function, and $x$ is the unknown which we seek to estimate. We define $\mathcal{A}_M x$ to be the forward operator discretized only in its range, so that $\mathcal{A}_M:X \to \mathbb{R}^M$, where $X = C(\bar\Omega)$, and $\Omega$ is the spatial domain. For example, in one-dimensional deconvolution, (with $\Omega = (0, 1)$) one can have
\[
   [\mathcal{A}_M x]_i= \int_{0}^{1} a(s_i-s') x(s') ds',\quad i=1,\ldots,M.
\]

Discretizing this integral in the $s'$ variable, e.g., using a uniform mesh on $[0,1]$ with $N$ grid elements and midpoint quadrature, then yields $\B{A}\B{x}$. Then we have that
\begin{equation}\label{equ:limmisfit}
\lim_{N\to\infty}\Vert\B{A}\B{x}-\B{b}\Vert^2_2=\Vert\mathcal{A}_M x-\B{b}\Vert^2_2.
\end{equation}
Note that here $\B{x}$ denotes the discretized version of $x \in X$. For the prior \cref{prior}, it is typical to choose $\B{\Gamma}_{\rm prior}$ to be the numerical discretization of the inverse of a differential operator. That is, letting $\Gamma_{\text{prior}}$ denote the infinite dimensional prior covariance operator, we define $\Gamma_\text{prior} = \mathcal{L}^{-1}$, where $\mathcal{L}$ is a differential operator. A basic requirement on $\Gamma_\text{prior}$ is that it has to be trace-class on $L^2(\Omega)$. That is, for any orthonormal basis $\{\phi_i\}_{i=1}^\infty$ of $L^2(\Omega)$, $\sum_{i=1}^\infty\langle \phi_i,\mathcal{L}^{-1}\phi_i\rangle<\infty$; see \cite{Stuart10}. Moreover, if $\langle
x,y\rangle$ is the standard $L^2(\Omega)$ inner product, we can use midpoint quadrature to obtain $\langle x,y\rangle=\lim_{N\to\infty} \langle \B{x}, \B{y} \rangle_N$, where the boldface letters indicate discretized versions of the variables and $\langle \B{x}, \B{y} \rangle_N := \frac1N\sum_{i=1}^N x_iy_i$.
Using this notation,
\begin{equation}\label{equ:limprior}
\lim_{N\to \infty}
   \ip{\B{x}}{\priorcov^{-1} \B{x}}_N = \ip{x}{\mathcal{L}x}.
\end{equation}
Combining~\cref{equ:limmisfit} and~\cref{equ:limprior} yields
\begin{equation}
\label{InfiniteDimemsions}
\lim_{N\to \infty}\left\{\frac\mu2\Vert\B{A}\B{x}-\B{b}\Vert^2_2
+\frac{\sigma}{2}\ip{\B{x}}{\priorcov^{-1}\B{x}}_N\right\}
=\frac\mu2\Vert\mathcal{A}_M x-\B{b}\Vert^2_2+\frac{\sigma}{2}\langle x,\mathcal{L}x\rangle.
\end{equation}

{There are two issues that arise when we consider the limit as $N\rightarrow \infty$: the first is mathematical, and the second is computational.}
A question that immediately arises is whether or not one can define an infinite dimensional limit of the posterior density function. We cannot define a probability density function on a function space. The reason for this is that in finite dimensions, the posterior density is none but the Radon--Nikodym derivative of posterior probability law of the inference parameter with respect to Lebesgue measure, but one cannot define a Lebesgue measure on an infinite-dimensional function space. However, as shown in~\cite{Stuart10}, given that the prior covariance operator is trace class and $\mathcal{A}_M:X\to\mathbb{R}^M$ is a continuous linear transformation, the posterior law of $x$, which we denote by
{
$\nu^{\B{b},\B{\theta}}$ is a Gaussian measure on $L^2(\Omega)$,
$\nu^{\B{b},\B{\theta}} = \mathcal{N}(x_\text{cond}(\B\theta), \Gamma_\text{cond}(\B\theta))$, with
\[
\Gamma_\text{cond}^{-1} = \mu \mathcal{A}_M^* \mathcal{A}_M + \sigma \Gamma_{\text{prior}}^{-1}, \qquad
x_\text{cond} =\mu\Gamma_\text{cond}\mathcal{A}^*_M \B{b}.
\]
}%
As for the construction of the prior covariance operator, as mentioned before, a common approach is to define them as inverses of differential operators. For example, we can define
{
\[
    \Gamma_\text{prior} = \mathcal{L}^{-s}, \quad
    \mathcal{L}u = -\kappa \Delta u + \alpha u,
    \quad \kappa > 0, \alpha \geq 0,
\]
with suitable boundary conditions, which is related to the Whittle-Mat\'ern
prior \cite{bardsley2018}. In this paper, we choose $\alpha=0$, $\kappa = \sigma^{-1/s}$, and  homogeneous Dirichlet boundary conditions for the Laplacian. Therefore,  $\Gamma_\text{prior} = \sigma (-\Delta)^{-s}$, leaving only $\sigma$ as the hyper-parameter for the prior.} Moreover, to
ensure that the covariance operator so defined is trace-class, we require $s >
d / 2$, where $d$ is the spatial dimension of the problem. For the
one-dimensional example in \cref{eqn:fred}, $s = 1$ would suffice. For problems
with $d = 2$ or $d =3$, a convenient option is $s = 2$. Note this assumption on
$s$ also ensures that the prior draws are almost surely continuous. For further
details on the definition of Gaussian measures on infinite-dimensional Hilbert
spaces, see~\cite{DaPrato06,DaPratoZabczyk02}.

A second question that arises is whether or not the performance of the hierarchical Gibbs sampler is dependent upon $N$. For inverse problems in which the infinite dimensional limit is well-defined, MCMC methods whose performance is independent of the discretization ($N$ in this case) are desirable. In line 3 of \cref{ALG:HierarchicalGibbs}, we see that $N$ appears in the Gamma conditional density $\pi(\sigma\mid\B{x},\mu,\B{b})$, thus it should not be surprising to find the $\sigma$-chain is dependent on $N$. The exact nature of this dependence is the subject of \cite[Theorem 3.4]{AgaBarPapStu}, where under reasonable assumptions it is shown that  the expected step in the $\sigma$-chain scales like $2/N$. Specifically, for any $\sigma>0$,
$$
\frac{N}{2}\mathbb{E}\left[\sigma_{(t+1)}-\sigma_{(t)}|\sigma_{(t)}=\sigma\right]=(\alpha_\sigma+1)\sigma
-f_N(\sigma;\bb)\sigma^2+\mathcal{O}(N^{-1/2}),
$$
where $\mathbb{E}$ denotes expectation and $f_N(\sigma;\bb)$ is bounded uniformly in $N$. Moreover, the variance of the step also scales like $2/N$; for any $\sigma>0$,
$$
\frac{N}{2}{\rm Var}\left[\sigma_{(t+1)}-\sigma_{(t)}|\sigma_{(t)}=\sigma\right]=2\sigma^2+\mathcal{O}(N^{-1/2}).
$$
A consequence of these results, as is noted in \cite{AgaBarPapStu}, is that the expected squared jumping distance of the Markov chain for $\sigma$ is $\mathcal{O}(1/N)$. Moreover, it is noted that the lag-1 autocorrelation of the $\sigma$-chain behaves like $1-c/N$ for some constant, but ${\rm Var}(\sigma_{(t)})=\mathcal{O}(1)$. Hence, the Monte Carlo error associated with $K-K_b$ draws in stationarity is $\mathcal{O}(\sqrt{N/(K-K_b)})$. {Thus,} the $\sigma$-chain becomes increasing correlated as $N\rightarrow\infty$. This phenomenon is illustrated in \cref{fig:ACF}, which displays the empirical autocorrelation functions for the $\mu$- and $\sigma$-chains generated by \cref{ALG:HierarchicalGibbs} for a one-dimensional image deblurring test problem. Note that as $N$ increases by a power of $2$, so does the integrated autocorrelation time (IACT).

\begin{figure}\centering
\includegraphics[scale=0.4]{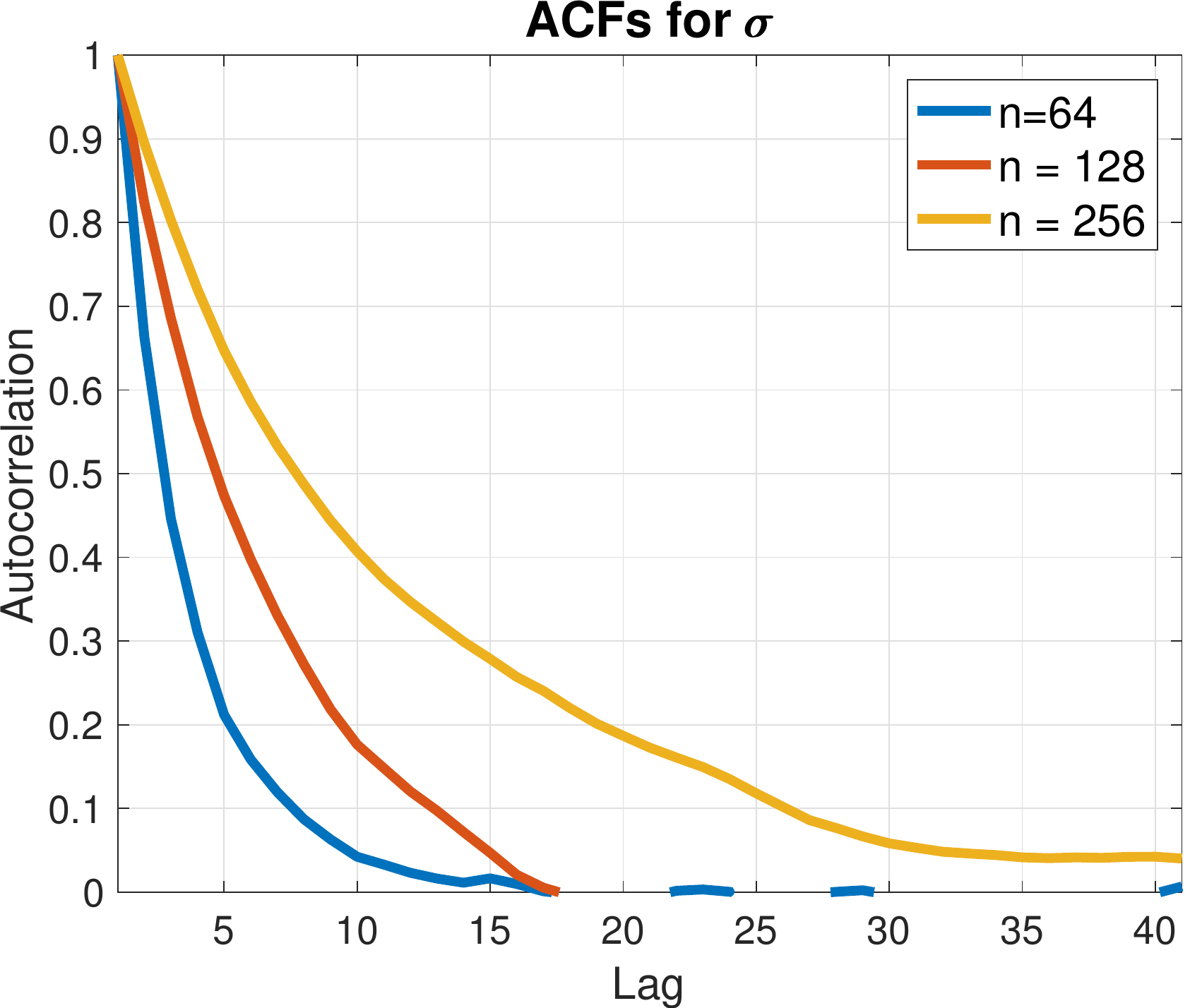} \hspace{1mm}
\includegraphics[scale=0.4]{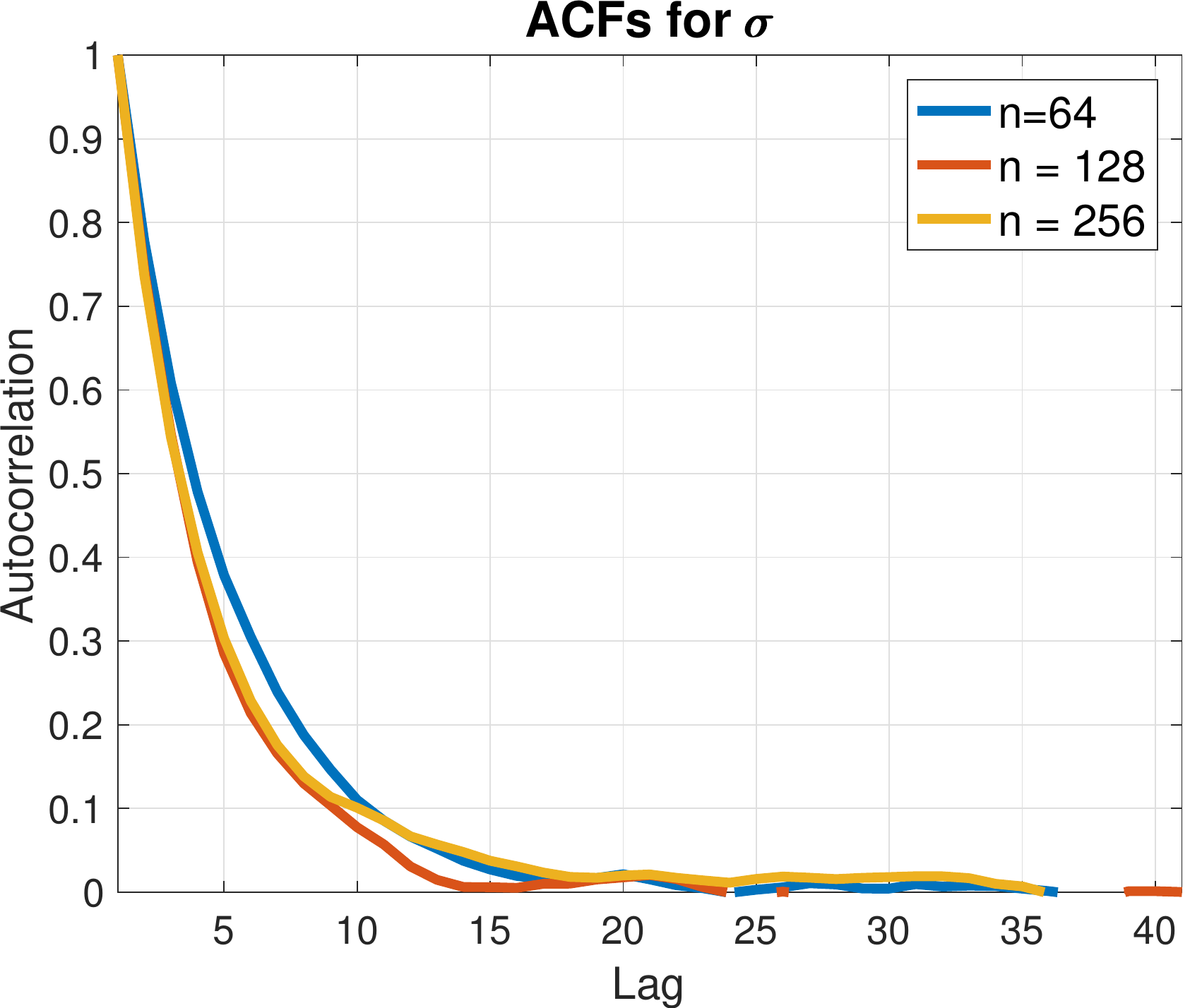}
\caption{Plots of the {empirical} autocorrelation functions for the $\sigma$ chains in the one-dimensional deblurring example: (left) Block-Gibbs and (right) Approximate One-Block discussed in~\cref{sec:OneBlockApprox}. See~\cref{ssec:1d} for details regarding the experimental setup. }
\label{fig:ACF}
\end{figure}

\subsection{One-Block MCMC}
\label{sec:OneBlock}

An alternative algorithm for computing a sample $(\B{x}',\B{\theta}')\sim \pi(\B{x}',\B{\theta}'\mid\B{b})$ is to first compute a sample from the marginal density $\B{\theta}'\sim \pi(\B{\theta}'\mid\B{b})$, defined in~\cref{marginal}, and then compute a sample from the conditional density $\B{x}'\sim \pi(\B{x}'\mid\B{b},\B{\theta}')$. In principle, we can use this procedure as an alternative to \cref{ALG:HierarchicalGibbs}. In practice, it is often not possible to sample directly from $\pi(\B{\theta}\mid\B{b})$. In the {\em one-block algorithm} of \cite{RueHel}, a Markov chain is generated in which if $(\B{x},\B\theta)$ is the current element of the chain, a state $\B{\theta}'$ is proposed from some proposal distribution with density $r(\B{\theta}'\mid\B{\theta})$, then $\B{x}'\sim\pi(\B{x}'\mid\B{b},\B{\theta}')$ is drawn, and the pair $(\B{x}',\B{\theta}')$ is accepted with probability
\begin{align}
    \rho_1(\B{\theta}';\B{\theta}) =& \> \min\left\{1,\frac{\pi(\B{x}',\B{\theta}'\mid\B{b})}{\pi(\B{x},\B{\theta}\mid\B{b})}
    \frac{r(\B{\theta}\mid \B{\theta}')\pi(\B{x}\mid\B{b},\B{\theta})}{r(\B{\theta}'\mid\B{\theta})\pi(\B{x}'\mid\B{b},\B{\theta}')}\right\}\nonumber \\ \label{OneBlockRatio}
    =&\> \min\left\{1,\frac{\pi(\B{\theta}'\mid\B{b})r(\B{\theta} \mid \B{\theta}')}{\pi({\B{\theta}}\mid\B{b})r(\B{\theta}' \mid \B{\theta})}\right\}.
\end{align}
The resulting MCMC method is given by \cref{ALG:OneBlock}.
\LinesNumbered
\begin{algorithm}[!ht]
\DontPrintSemicolon
\SetKwInput{Input}{Input}
\SetKwInput{Output}{Output}
	\Input{Set 
$\B{x}_{(0)}=\B{x}_\text{cond}(\B{\theta}_{(0)})$, and define $K$ and burn-in period $K_b$.}
	\Output{Approximate samples from the posterior distribution  $\{ \B{x}_{(t)}, \B{\theta}_{(t)} \}_{t=K_b +1}^K$.}
	\BlankLine
  	\For{$t=1$ to $K$}
		{
   	    Compute $\B{\theta}_{\ast}\sim r(\B{\theta}_*\mid \B{\theta}_{(t-1)})$. \;
        Compute $\B{x}_{(t)}\sim \pi(\B{x}_{(t)}\mid \B{b},\B{\theta}_{\ast})$.\;
        Set $(\B{x}_{(t)},\B{\theta}_{(t)})=(\B{x}_\ast,\B{\theta}_{\ast})$ with probability
        $\rho_1(\B{\theta}_{\ast};\B{\theta}_{(t-1)})$, defined in~\cref{OneBlockRatio},
        else set $(\B{x}_{(t)},\B{\theta}_{(t)})=(\B{x}_{(t-1)},\B{\theta}_{(t-1)})$. \;
        }
\caption{One-Block MCMC}
\label{ALG:OneBlock}
\end{algorithm}

{

Note that by combining the random walk $\B{\theta}'\sim r(\B{\theta}'\mid \B{\theta})$ with the conditional sample $\B{x}'\sim\pi(\B{x}'\mid\B{b},\B\theta')$,  $\B{x}$ is marginalized (i.e., integrated out) from the posterior, as is seen by the acceptance ratio \cref{OneBlockRatio} which is independent of $\B{x}$. Note that if $r(\cdot \mid \cdot)$ is used as a Metropolis-Hastings proposal for sampling from the marginal density $\pi(\B{\theta}\mid\B{b})$ (which defines the MCMC method proposed in \cite{FoxNor}), the acceptance ratio is also given by \cref{OneBlockRatio}. Thus, the $\B\theta$-chain generated by \cref{ALG:OneBlock} converges in distribution to $\pi(\B{\theta}\mid\B{b})$ and its behavior is independent of $\B{x}$, as desired.

As an alternative to MCMC, we remark that there exist special cases of the hierarchical Bayesian model in which the hyperparmeters $\B{\theta}$ can be analytically integrated out to obtain a closed-form marginal posterior density $\B{x} \mid \B{b}$ (e.g., multivariate $t$). Despite the fact that MCMC is no longer necessary in such scenarios, there is still the need to work with a scale matrix that may require many forward solutions to compute, rendering such an approach impractical. \cite{WilcoxGhattas} study this issue in the Gaussian case in which the precision parameters are held fixed (as opposed to integrating them out). The same problems arise with any other distribution of $\B{x} \mid \B{b}$, but with additional difficulties since other forms are generally no longer in the Gaussian class of distributions. Further, a researcher may wish to use alternative hyperpriors other than those that lead to closed-form distributions for the sake of obtaining a better reconstruction. The approach we consider here is flexible with respect to such considerations because the conditional distribution of $\B{x} \mid \B{\theta}, \B{b}$ is still Gaussian, regardless of the prior on $\B{\theta}$. Exploration of alternative priors for $\B{\theta}$ is beyond the scope of the present work.
}

\section{Algorithms}
\label{sec:alg}

{For some applications,} evaluating the marginal density $\pi(\B{\theta}\mid\B{b})$ is not computationally tractable, nor is computing exact samples $\B{x}'\sim \pi(\B{x}'\mid \B{b},\B{\theta}')$. In this case, implementation of~\cref{ALG:OneBlock} is infeasible. However, suppose that we have an approximate posterior density function $\hat\pi(\B{x},\B{\theta}\mid\B{b})$ for which the marginal density $\hat\pi(\B{\theta}\mid\B{b})$ can be evaluated, and exact samples $\B{x}'\sim \hat\pi(\B{x}'\mid \B{b},\B{\theta}')$ can be drawn efficiently.

In this section, we propose three MCMC algorithms, each of which uses $\hat\pi(\B{x},\B{\theta}\mid\B{b})$ to approximate~\cref{ALG:OneBlock}. We note that Rue and Held \cite{RueHel} also consider approximate marginalization of $\B{x}$ in the one-block algorithm. {However, the difference between our approach and that of \cite{RueHel} is the nature of the approximate posterior distribution. Specifically, motivated by non-Gaussian conditional distributions that arise in, e.g., Poisson counts for disease mapping, \cite{RueHel} use an approximation  based on a Taylor expansion of the log-likelihood about the mode of the full conditional distribution, resulting in a Gaussian proposal in the Metropolis-Hastings algorithm. By contrast, we are concerned with cases in which the full conditional is still Gaussian distributed, but drawing realizations from the distribution and evaluating the density are computationally prohibitive because of the dimensionality of the unknown parameters.} We discuss in \cref{ss_lowrank} our construction of the approximation $\hat\pi(\B{x},\B{\theta} \mid \B{b})$ and its effect on the proposed algorithms.

\subsection{Approximate One-Block MCMC}\label{sec:OneBlockApprox}
Suppose we generate a proposal $(\B{x}',\B\theta')$ by first drawing $\B\theta'$ from a proposal distribution with density $r(\B\theta'\mid\B\theta)$ followed by drawing $\B{x}'\sim \hat\pi(\B{x}'\mid\B{\theta}',\B{b})$. Using this in a Metropolis-Hastings sampler with target distribution $\pi(\B{x}, \B\theta \mid \B{b})$ is simply the one-block algorithm with $\hat\pi(\B{x} \mid \B\theta, \B{b})$ in place of $\pi(\B{x} \mid \B{\theta}, \B{b})$. The proposal density is then given by
\[
    q(\B{x}',\B\theta'\mid\B{x},\B\theta) = \hat\pi(\B{x}'\mid \B\theta',\B{b})r(\B\theta'\mid\B\theta),
\]
and hence, the acceptance probability is
\begin{equation}
\label{OneBlockApproxRatio}
\rho_2(\B{x}',\B{\theta}'; \B{x},\B{\theta})=\min\left\{1,\frac{\pi(\B{x}',\B{\theta}' \mid \B{b})\hat\pi(\B{x}\mid\B{\theta},\B{b})r( \B{\theta}\mid\B{\theta}' )}{\pi(\B{x},\B{\theta} \mid \B{b})\hat\pi(\B{x}'\mid\B{\theta}',\B{b})r(\B{\theta}'\mid \B{\theta})} \right\}.
\end{equation}
The full procedure is summarized in~\cref{ALG:OneBlockApprox}.
\LinesNumbered
\begin{algorithm}[tb]
\DontPrintSemicolon
\SetKwInput{Input}{Input}
\SetKwInput{Output}{Output}
	\Input{Initialize $\B{\theta}_{(0)}$. Define the length of the chain $C$ and burn-in period $C_b$.}
	\Output{Approximate sample from the posterior distribution  $\{ \B{x}_{(t)}, \B{\theta}_{(t)} \}_{t=C_b +1}^K$.}
	\BlankLine
	Compute $\B{x}_{(0)}\sim \hat\pi(\B{x}\mid\B{b},\B{\theta}_{(0)})$. \;
  	\For{$t=1$ to $K$} {
		Draw $\B{\theta}'$ from distribution with density $r(\B{\theta}'\mid \B{\theta}_{(t-1)})$. \;
		Draw $\B{x}'$ from distribution with density $\hat\pi(\B{x}'\mid\B{\theta}',\B{b})$\;
		Set $(\B{x}_{(t)},\B\theta_{(t)} ) = (\B{x}',\B{\theta}')$ with probability $\rho_2(\B{x}',\B\theta'; \B{x}_{(t-1)},\B{\theta}_{(t-1)})$ defined in \cref{OneBlockApproxRatio}; else set $(\B{x}_{(t)},\B\theta_{(t)}) =  (\B{x}_{(t-1)},\B{\theta}_{(t-1)})$.  \;
        }
\caption{Approximate One-Block MCMC.}
\label{ALG:OneBlockApprox}
\end{algorithm}

There are two observations to make about~\cref{ALG:OneBlockApprox}, each of which motivates the MCMC methods that follow. First, computing the ratio~\cref{OneBlockApproxRatio} requires evaluating the true posterior density $\pi(\B{x},\B{\theta} \mid \B{b})$. In cases in which this is computationally burdensome, the MCMC method presented next seeks to avoid extraneous evaluations of $\pi(\B{x},\B{\theta} \mid \B{b})$ using a technique known as {\em delayed acceptance} \cite{ChrFox}. Second,~\cref{ALG:OneBlockApprox} only {\em approximately} integrates $\B{x}$ out of $\pi(\B{x}, \B\theta \mid \B{b})$, since $\pi(\B\theta \mid \B{b}) \approx \pi(\B{x}, \B{\theta} \mid \B{b})/\hat\pi(\B{x} \mid \B\theta, \B{b})$, but this is not an exact marginalization and thus some dependence between the $\B{x}$ and $\B{\theta}$ chains {may} remain, slowing convergence of the algorithm. However, as we discuss below,~\cref{ALG:OneBlockApprox} is a special case of {the} so-called  {\em pseudo-marginal MCMC algorithm} \cite{PseudoMarg}. By generalizing the approximate one-block algorithm, we can improve the approximation to $\pi(\B{\theta} \mid \B{b})$ and thus improve {the} mixing of the Markov chains.

\subsection{Approximate One-Block MCMC with Delayed Acceptance}\label{sec:OneBlockDA}
It will often be the case that evaluating $\pi(\B{x},\B{\theta}\mid\B{b})$ is computationally prohibitive and/or significantly more expensive than evaluating $\hat\pi(\B{x},\B{\theta}\mid\B{b})$, in which case one would like to generate samples from $\pi(\B{x}, \B{\theta} \mid \B{b})$ while minimizing the number of times its density is evaluated. This motivates the use of the {\em delayed acceptance} framework of \cite{ChrFox} to improve the computational efficiency of~\cref{ALG:OneBlockApprox}. In this approach, one step of~\cref{ALG:OneBlock} is used with $\hat\pi(\B{x},\B{\theta} \mid \B{b})$ taken as the target distribution. Only if $(\B{x}',\B{\theta}')$ is accepted as a sample from $\hat\pi(\B{x},\B{\theta}\mid\B{b})$ is it proposed as a sample from $\pi(\B{x},\B{\theta}\mid\B{b})$. The idea is to only evaluate $\pi(\B{x}', \B{\theta}' \mid \B{b})$ for proposed states that have a high probability of being accepted as draws from the true distribution. This prevents us from wasting computational effort on rejected proposals. The approximate distribution in the first stage is essentially a computationally cheap ``filter'' that prevents this from occurring. The resulting {\em approximate one-block MCMC with delayed acceptance (ABDA)} procedure is given in~\cref{ALG:OneBlockApproxDA}. {Note that the delayed acceptance algorithm proposed here is related to the surrogate transition method~\cite[Section 9.4.3]{Liu2004} and is a special case of \cite{ChrFox} with a state-independent approximation.}

\LinesNumbered
\begin{algorithm}[tb]
\DontPrintSemicolon
\SetKwInput{Input}{Input}
\SetKwInput{Output}{Output}
	\Input{Initialize $\B{\theta}_{(0)}$. Define the length of the chain {$C$ and burn-in period $C_b$}.}
	\Output{Approximate sample from the posterior distribution  $\{ \B{x}_{(t)}, \B{\theta}_{(t)} \}_{t=C_b +1}^C$.}
	\BlankLine
	Compute $\B{x}_{(0)}\sim \hat\pi(\B{x}\mid\B{b},\B{\theta}_{(0)})$. \;
 	\For{$t=1$ to $K$} {
		\tcp{\underline{Stage 1:} apply the one-block algorithm to $\hat\pi(\B{x},\B{\theta}\mid\B{b})$.}
		Draw $\B{\theta}_\ast$ from the distribution with density $r(\B{\theta}_\ast\mid\B{\theta}_{(t-1)})$. \;
		Draw $\B{x}_\ast$ from the distribution with density $\hat\pi(\B{x}_\ast\mid\B{\theta}_\ast,\B{b})$.\;
        Compute $\hat\rho_1(\B{\theta}_\ast; ~\B{\theta}_{(t-1)})$ defined in \cref{eqn:DAOneBlockRatio}\;
		Set $(\B{x}',\B{\theta}') \leftarrow (\B{x}_\ast, \B{\theta}_\ast)$ and promote to Stage 2 with probability $\hat\rho_1(\B{\theta}';\B{\theta}_{(t-1)})$ else set $(\B{x}',\B{\theta}') \leftarrow (\B{x}_{(t-1)},\B{\theta}_{(t-1)})$ and promote. \;
		\tcp{\underline{Stage 2:} accept/reject $(\B{x}',\B{\theta}')$ as a proposal for $\pi(\B{x},\B{\theta}\mid\B{b})$.}
        Define
	\begin{equation}
        \label{eqn:DAOneBlockRatio2}
        \rho_3(\B{x}', \B{\theta}';\B{x},\B{\theta}) = \min\left\{1,\frac{\pi(\B{x}',\B{\theta}' \mid \B{b})q(\B{x},\B{\theta}\mid \B{x}',\B{\theta}')}{\pi(\B{x},\B{\theta} \mid \B{b})q(\B{x}',\B{\theta}'\mid \B{x},\B{\theta}) } \right\},
        \end{equation}
	where $q(\B{x}',\B{\theta}'\mid \B{x}_{(t-1)},\B{\theta}_{(t-1)})$ is defined in~\cref{e_efftrans}.
        Set $(\B{x}_{(t)},\B{\theta}_{(t)})=(\B{x}',\B{\theta}')$ with probability $\rho_3(\B{x}', \B{\theta}';\B{x}_{(t-1)},\B{\theta}_{(t-1)})$, else set $(\B{x}_{(t)},\B{\theta}_{(t)})=(\B{x}_{(t-1)},\B{\theta}_{(t-1)})$.\;
		
	}
\caption{Approximate One-Block MCMC with delayed acceptance. }
\label{ALG:OneBlockApproxDA}
\end{algorithm}

Let $\hat\rho_1(\B{\theta}';~\B{\theta}_{(t-1)})$ be given by~\cref{OneBlockRatio}, but with {$\pi(\B{x},\B{\theta}\mid\B{b})$ replaced by $\hat\pi(\B{x},\B{\theta}\mid\B{b})$;} i.e.,
\begin{align}\label{eqn:DAOneBlockRatio}
    \hat\rho_1(\B{\theta}'; ~\B{\theta}_{(t-1)}) =& \min\left\{1,\frac{\hat\pi(\B{x}',\B{\theta}'\mid\B{b})}{\hat\pi(\B{x}_{(t-1)},\B{\theta}_{(t-1)}\mid\B{b})}
    \frac{r(\B{\theta}_{(t-1)}\mid \B{\theta}')\hat\pi(\B{x}_{(t-1)}\mid\B{b},\B{\theta}_{(t-1)})}{r(\B{\theta}'\mid\B{\theta}_{(t-1)})\hat\pi(\B{x}'\mid\B{b},\B{\theta}')}\right\}
    \nonumber \\
    =& \min\left\{1,\frac{\hat\pi(\B{\theta}'\mid\B{b})r(\B{\theta}_{(t-1)} \mid \B{\theta}')}{\hat\pi({\B{\theta}}_{(t-1)}\mid\B{b})r(\B{\theta}' \mid \B{\theta}_{(t-1)})}\right\}.
\end{align}
Then, Stage 2 of~\cref{ALG:OneBlockApproxDA} is a Metropolis-Hastings algorithm with target distribution $\pi(\B{x}, \B{\theta} \mid \B{b})$ and proposal density given by
\begin{equation}\label{e_efftrans}
\begin{aligned}
    q(\B{x}',\B\theta' \mid \B{x}_{(t-1)}, \B\theta_{(t-1)}) =&\> \hat\rho_1(\B{\theta}'; ~\B{\theta}_{(t-1)})\hat\pi(\B{x}' \mid \B{\theta}', \B{b})r(\B{\theta}' \mid \B\theta_{(t-1)}) \\
        &\quad + ~\delta_{(\B{x}_{(t-1)}, \B{\theta}_{(t-1)})}(\B{x}', \B{\theta}')(1 - y(\B{\theta}_{(t-1)})),
\end{aligned}
\end{equation}
where $y(\B{\theta}_{(t-1)}) = \int \int \hat\rho_1(\B{\theta}'; ~\B{\theta}_{(t-1)})\hat\pi(\B{x}' \mid \B{\theta}', \B{b})r(\B{\theta}' \mid \B{\theta}_{(t-1)})\dx \B{x}' \dx \B{\theta}'$. Note that there is never a need to evaluate $y(\B{\theta}_{(t-1)})$, since when the promoted state is $(\B{x}_{(t-1)}, \B{\theta}_{(t-1)})$, $$\rho_3(\B{x}',\B{\theta}';\B{x}_{(t-1)},\B{\theta}_{(t-1)})=1,$$ and the chain remains at the same point. Conversely, when the composite sample $(\B{x}', \B{\theta}')$ {is promoted, $(\B{x}', \B{\theta}') \neq (\B{x}_{(t-1)}, \B{\theta}_{(t-1)})$ and}
\[
    q(\B{x}',\B\theta' \mid \B{x}_{(t-1)}, \B\theta_{(t-1)}) = \hat\rho_1(\B{\theta}'; ~\B{\theta}_{(t-1)})\hat\pi(\B{x}' \mid \B{\theta}', \B{b})r(\B{\theta}' \mid \B\theta_{(t-1)}).
\]
We discuss in~\cref{sec:analysis} the conditions under which the acceptance rate in the second stage is high, thus preventing wasted computational effort on rejected proposals.

{The validity of the ABDA algorithm can be gleaned by recognizing that it is a special case of the delayed acceptance algorithm in \cite{ChrFox}, with first stage proposal given by $\hat\pi(\B{x} \mid \B{\theta}', \B{b})r(\B{\theta}' \mid \B{\theta})$, the approximating density given by $\hat\pi(\B{x}, \B{\theta} \mid \B{b})$, and target density $\pi(\B{x}, \B{\theta} \mid \B{y})$. This latter observation allows us to apply Theorem 1 of \cite{ChrFox} to establish irreducibility and aperiodicity of the ABDA Markov chain. Standard ergodic theory (e.g., \cite{RobertCasella04}) then provides that $\pi(\B{x}, \B{\theta} \mid \B{b})$ is, in fact, the limiting distribution.}

\subsection{Pseudo-Marginal MCMC}
As noted at the end of~\cref{ALG:OneBlock}, the $\B\theta$-chain generated by~\cref{ALG:OneBlock} is independent of the $\B{x}$-chain. Hence, it does not suffer from the same degeneracy issues as the $\B\theta$-chain generated by~\cref{ALG:HierarchicalGibbs}. When it is not computationally feasible to implement~\cref{ALG:OneBlock}, and an approximate posterior $\hat\pi(\B{x},\B\theta\mid\B{b})$ is used as in~\cref{ALG:OneBlockApprox,ALG:OneBlockApproxDA}, the marginalization is only approximate so that there still remains some dependence between the $\B{\theta}$ and $\B{x}$ chains. This dependence can be mitigated as the approximation to the marginal distribution of $\B{\theta}$ improves. In particular, we can use importance sampling, with importance density $\hat\pi(\B{x}\mid\B{b},\B\theta)$, to approximate the integration over $\B{x}$. Specifically, we have that
\begin{align}
	\pi(\B{\theta}'\mid\B{b})  =& \> \int_{\mb{R}^N} {\pi(\B{x},\B\theta'\mid\B{b})} d\B{x} \nonumber\\
                                              =& \> \int_{\mb{R}^N} \frac{\pi(\B{x},\B\theta'\mid\B{b})}{\hat\pi(\B{x}\mid\B{b},\B\theta')} \hat\pi(\B{x}\mid\B{b}, \B\theta')d\B{x}\nonumber \\ \label{pK}
	\approx& \>\frac{1}{K}\sum_{j=1}^{K} \frac{\pi(\B{x}'_j,\B\theta'\mid\B{b})}{\hat\pi(\B{x}'_j\mid\B{b},\B\theta')} \eqdef \pi^K (\B{\theta}'\mid\B{b}),
\end{align}
where $\B{x}'_j \sim \hat\pi(\B{x}'\mid \B{b},\B\theta')$. The idea behind pseudo-marginal MCMC \cite{PseudoMarg}, in our setting, is to generalize~\cref{ALG:OneBlockApprox} by using $\pi^K(\B{\theta}\mid\B{b})$ as an approximation to $\pi(\B{\theta}\mid \B{b})$. The resulting algorithm is given in~\cref{ALG:PseudoMarginal}.
\LinesNumbered
\begin{algorithm}[!ht]
\DontPrintSemicolon
\SetKwInput{Input}{Input}
\SetKwInput{Output}{Output}
	\Input{Initialize $\B{\theta}_{(0)}$. Define the length of the chain $C$ and burn-in period $C_b$.}
	\Output{Approximate samples from the posterior distribution  $\{ \{\B{x}_{(t,i)}\}_{i=1}^K, \B{\theta}_{(t)} \}_{t=C_b +1}^C$.}
	\BlankLine
    Compute  $\pi^K(\B{\theta}_{(0)}\mid\B{b})$ with $\B{x}_{j}'\sim\hat\pi(\B{x}'\mid \B{b},\B\theta_{(0)})$ as in \cref{pK}. Define $\{\B{x}_{(0,j)}\}_{j=1}^K=\{\B{x}'_j\}_{j=1}^K$.\; 	
    \For{$t=1$ to $C$} {
    Compute $\B{\theta}'\sim r(\cdot\mid\B{\theta}_{(t-1)})$.\;
	Compute $\pi^K(\B{\theta}'\mid\B{b})$ using $\B{x}'_j\sim\hat\pi(\B{x}'\mid \B{b},\B\theta')$ as in \cref{pK}.\;
    Compute
    \begin{equation}
    \label{PseudoMargRatio}
    \rho_4(\B{\theta}',\B{\theta}_{(t-1)})=\min\left\{1,\frac{\pi^K(\B{\theta}'\mid\B{b})r(\B{\theta}_{(t-1)}\mid\B{\theta}')}{\pi^K(\B{\theta}_{(t-1)}\mid\B{b})r(\B{\theta}'\mid\B{\theta}_{(t-1)})}\right\}.
    \end{equation}\;
	With probability $\rho_4(\B{\theta}',\B{\theta}_{(t-1)})$, set  $\B{\theta}_{(t)}=\B{\theta}'$ and $\{\B{x}_{(t,j)}\}_{j=1}^K=\{\B{x}'_j\}_{j=1}^K$,
else set $\B{\theta}_{(t)}=\B{\theta}_{(t-1)}$ and $\{\B{x}_{(t,j)}\}_{j=1}^K=\{\B{x}_{(t-1,j)}\}_{j=1}^K$.
}
\caption{Pseudo-Marginal MCMC. }
\label{ALG:PseudoMarginal}
\end{algorithm}

When $K=1$,~\cref{ALG:PseudoMarginal} simply reduces to~\cref{ALG:OneBlockApprox}. Conversely, as $K\rightarrow\infty$, $\pi^K (\B{\theta}'\mid\B{b})\rightarrow\pi(\B{\theta}'\mid\B{b})$, and hence the Markov chains produced by~\cref{ALG:PseudoMarginal,ALG:OneBlock} become indistinguishable. Consequently, as $K$ increases, the dependence between the $\B\theta$-chain and the $\B{x}$-chain dissipates as desired. It is shown in \cite{PseudoMarg} that the value of $\pi^K(\B{\theta}_{(t-1)}\mid\B{b})$ computed in step $t-1$ can be reused in step $t$ so that a new set of importance samples does not need to be computed in~\cref{pK}. Furthermore, this paper showed that the corresponding Markov chain will converge in distribution to the target density, in our case $\pi(\B{x},\B\theta\mid\B{b})$.

\section{Analysis}
\label{sec:analysis}

\cref{ALG:OneBlockApprox,ALG:OneBlockApproxDA,ALG:PseudoMarginal} are all approximations of the one-block algorithm, \cref{ALG:OneBlock}, and they are meant to be used in cases in which implementing one-block is computationally {expensive}. All three algorithms require an approximation of the posterior, $\hat\pi(\B{x},\B\theta\mid\B{b})$, which we assume has the form \cref{e_post2}, but with the conditional covariance $\cond$ replaced by an approximation $\condh$. We will use the following acronyms in what follows: AOB for Approximate One-Block, \cref{ALG:OneBlockApprox}, and ABDA for Approximate One-Block with Delayed Acceptance, \cref{ALG:OneBlockApproxDA}.

All three algorithms require the computation of samples from the approximate conditional $\hat\pi(\B{x}\mid\B{b},\B\theta)$, which is of the form $$\B{x}\mid\B{b},\B\theta \sim \mc{N}(\xcondh,\condh),$$ where
$\xcondh \equiv \mu \condh \B{A}^\top \B{b}$.
In~\cref{ss_lowrank}, we tailor these results to a specific choice of $\hat\pi(\B{x}\mid\B{b},\B\theta)$.
The following quantity will be important in what follows:
\begin{equation}\label{eqn:wx}
	w(\B{x},\B\theta) \equiv  \exp\left( -\frac{1}{2} \B{x}^\top(\condinv - \condhinv)\B{x} \right).
\end{equation}
An expression for the moments of $w(\B{x},\B\theta)$ can be computed analytically by using properties of Gaussian integrals and was established in~\cite{brown2016computational}. For positive integers $m$,
\begin{equation}\label{eqn:moments}
	\mathbb{E}_{\hat\pi(\B{x}\mid\B{b},\B\theta)} [w^m(\B{x},\B\theta)]  = \frac{1}{M_m(\B\theta)},
\end{equation}
where
\begin{equation}\label{eqn:nt}
	M_m(\B\theta) \equiv \frac{\exp\left(\frac{\mu^2}{2}\B{b}^\top\B{A}(\B{M}_m^{-1}(\B\theta)- \condh)\B{A}^\top \B{b} \right)}{(\det\condh {\det\B{\Gamma}_m(\B\theta))^{1/2} }},
\end{equation}
with
\[ {\B{\Gamma}_m(\B\theta)} = m(\condinv -\condhinv) + \condhinv. \]
Further results for $M_m(\B\theta)$ can be derived if $\condh$ is known explicitly. When $\condh$ is constructed using the low-rank approach outlined in \cref{ss_lowrank} below, we have that $M_m(\B\theta) \geq 1.$

\subsection{Analysis of the Approximate One Block acceptance ratio}


Our first result derives a simplified version of the acceptance ratio of AOB (\cref{ALG:OneBlockApprox}) that is more amenable to interpretation.
\begin{prop}\label{p_accept1}
In~\cref{ALG:OneBlockApprox}, let $(\B{x},\B\theta)$ denote the current state of the AOB chain and let $(\B{x}',\B\theta')$ be the proposed state. Then, the acceptance ratio simplifies to
	\[ \rho_2(\B{x}',\B{\theta}'; \B{x},\B{\theta}) =  \min\left\{1, \frac{z(\B{x}',\B{\theta}') \pi(\B{\theta}'\mid \B{b})r( \B{\theta}\mid\B{\theta}' )}{z(\B{x},\B{\theta})\pi(\B{\theta}\mid \B{b})r(\B{\theta}'\mid \B{\theta})}\right\},\]
	where $z(\B{x},\B\theta) \equiv w(\B{x},\B\theta)/M_1(\B\theta)$.
\end{prop}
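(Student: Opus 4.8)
The plan is to start from the definition of $\rho_2$ in~\cref{OneBlockApproxRatio} and substitute the known forms of the densities. First I would write out $\pi(\B{x},\B\theta\mid\B{b})$ using its factorization $\pi(\B{\theta}\mid\B{b})\pi(\B{x}\mid\B{b},\B\theta)$ from~\cref{marginal}, and similarly write $\hat\pi(\B{x}\mid\B{b},\B\theta)$ as the Gaussian density $\mc{N}(\xcondh,\condh)$. Substituting these into~\cref{OneBlockApproxRatio} gives a ratio in which the $r(\cdot\mid\cdot)$ factors and the $\pi(\B\theta\mid\B{b})$ factors appear exactly as in the claimed expression, so the task reduces to showing that the remaining factor
\[
\frac{\pi(\B{x}'\mid\B{b},\B\theta')}{\hat\pi(\B{x}'\mid\B{b},\B\theta')}\Big/\frac{\pi(\B{x}\mid\B{b},\B\theta)}{\hat\pi(\B{x}\mid\B{b},\B\theta)}
\]
equals $z(\B{x}',\B\theta')/z(\B{x},\B\theta)$ with $z(\B{x},\B\theta)=w(\B{x},\B\theta)/M_1(\B\theta)$.

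The key step is therefore to identify the single-argument ratio $\pi(\B{x}\mid\B{b},\B\theta)/\hat\pi(\B{x}\mid\B{b},\B\theta)$. Both numerator and denominator are Gaussian densities in $\B{x}$ with the \emph{same} mean-producing linear form $\mu\B{A}^\top\B{b}$ (note $\xcond = \cond(\mu\B{A}^\top\B{b})$ and $\xcondh = \condh(\mu\B{A}^\top\B{b})$), but different precision matrices $\condinv$ and $\condhinv$. Expanding the quadratic forms $-\tfrac12(\B{x}-\xcond)^\top\condinv(\B{x}-\xcond)$ and its hatted analogue, the cross terms $\mu\B{x}^\top\B{A}^\top\B{b}$ cancel between the two, the pure-$\B{x}$ quadratic terms combine to exactly $-\tfrac12\B{x}^\top(\condinv-\condhinv)\B{x}$, which is $\log w(\B{x},\B\theta)$, and the $\B{x}$-independent pieces — the normalizing determinants $(\det\cond)^{-1/2}$, $(\det\condh)^{-1/2}$ and the constant terms $\tfrac{\mu^2}{2}\B{b}^\top\B{A}\cond\B{A}^\top\B{b}$, $\tfrac{\mu^2}{2}\B{b}^\top\B{A}\condh\B{A}^\top\B{b}$ from completing the square — collect into a $\B\theta$-only prefactor. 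I would then recognize this prefactor as precisely $M_1(\B\theta)$: setting $m=1$ in~\cref{eqn:nt} gives $\B{\Gamma}_1(\B\theta)=\condinv$, so $M_1(\B\theta) = \exp\!\big(\tfrac{\mu^2}{2}\B{b}^\top\B{A}(\cond-\condh)\B{A}^\top\B{b}\big)/(\det\condh\,\det\cond)^{1/2}$, which is exactly the ratio of the two Gaussian normalizing/completing-the-square constants. Hence $\pi(\B{x}\mid\B{b},\B\theta)/\hat\pi(\B{x}\mid\B{b},\B\theta) = w(\B{x},\B\theta)/M_1(\B\theta) = z(\B{x},\B\theta)$, and forming the ratio of this at $(\B{x}',\B\theta')$ over $(\B{x},\B\theta)$ and recombining with the surviving $\pi(\B\theta\mid\B{b})$ and $r$ factors yields the claim.

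The main obstacle is purely bookkeeping: carefully tracking the $\B\theta$-dependent constants so that the expression that multiplies $w(\B{x},\B\theta)$ is shown to be exactly $1/M_1(\B\theta)$ rather than merely "some function of $\B\theta$." The cleanest way to avoid error is to observe that $\int \hat\pi(\B{x}\mid\B{b},\B\theta)\,z(\B{x},\B\theta)\,\dx\B{x} = \mathbb{E}_{\hat\pi}[w(\B{x},\B\theta)]/M_1(\B\theta) = 1$ by~\cref{eqn:moments} with $m=1$; since $\hat\pi(\B{x}\mid\B{b},\B\theta)z(\B{x},\B\theta)$ is a nonnegative function integrating to $1$ and, by the quadratic-form computation above, is proportional (in $\B{x}$) to $\pi(\B{x}\mid\B{b},\B\theta)$, it must equal $\pi(\B{x}\mid\B{b},\B\theta)$ on the nose — which pins down the constant without having to manipulate determinants explicitly. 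Everything else is direct substitution.
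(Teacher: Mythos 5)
Your proof is correct and takes essentially the same route as the paper: both reduce the acceptance ratio, via the factorization $\pi(\B{x},\B\theta\mid\B{b})=\pi(\B{x}\mid\B\theta,\B{b})\,\pi(\B\theta\mid\B{b})$, to the single-argument ratio $\pi(\B{x}\mid\B\theta,\B{b})/\hat\pi(\B{x}\mid\B\theta,\B{b})$ and identify it with $z(\B{x},\B\theta)$; the only difference is that the paper cites \cite[Proposition~1]{brown2016computational} for that identity while you derive it directly. Your normalization argument (that $\hat\pi(\B{x}\mid\B{b},\B\theta)\,z(\B{x},\B\theta)$ integrates to one by~\cref{eqn:moments} and is proportional in $\B{x}$ to $\pi(\B{x}\mid\B{b},\B\theta)$) correctly pins down the $\B\theta$-dependent constant, which is just as well since your displayed formula for $M_1(\B\theta)$ should involve $\det\B{\Gamma}_1(\B\theta)=\det\condinv$ rather than $\det\cond$.
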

{
\begin{proof}
It is clear that we only need to focus on the second term in the acceptance ratio, which we can rewrite as
\[ \frac{\pi(\B{x}',\B{\theta}' \mid \B{b})\hat\pi(\B{x}\mid\B{\theta},\B{b})r( \B{\theta}\mid\B{\theta}' )}{\pi(\B{x},\B{\theta} \mid \B{b})\hat\pi(\B{x}'\mid\B{\theta}',\B{b})r(\B{\theta}'\mid \B{\theta})} = \frac{ r( \B{\theta}\mid\B{\theta}' ) \pi(\B{x}',\B{\theta}' \mid \B{b})/ \hat\pi(\B{x}'\mid\B{\theta}',\B{b}) }{ r(\B{\theta}'\mid \B{\theta}) \pi(\B{x},\B{\theta} \mid \B{b})/\hat\pi(\B{x}\mid\B{\theta},\B{b}) } .\]
Since the posterior distribution is the product of the conditional and the marginal, we have
\begin{equation}\label{e_inter1}
\frac{\pi(\B{x},\B{\theta} \mid \B{b})}{\hat\pi(\B{x}\mid\B{\theta},\B{b})}= \frac{\pi(\B{x}\mid\B\theta, \B{b})\pi(\B\theta\mid\B{b})}{\hat\pi(\B{x}\mid\B{\theta},\B{b})}.
\end{equation}
In the proof of \cite[Proposition 1]{brown2016computational}, it is shown that $\pi(\B{x}\mid\B{\theta},\B{b})/\hat\pi(\B{x}\mid\B{\theta},  \B{b}) = z(\B{x},\B\theta)$, so that we get
\[ \frac{ r( \B{\theta}\mid\B{\theta}' ) \pi(\B{x}',\B{\theta}' \mid \B{b})/ \hat\pi(\B{x}'\mid\B{\theta}',\B{b}) }{ r(\B{\theta}'\mid \B{\theta}) \pi(\B{x},\B{\theta} \mid \B{b})/\hat\pi(\B{x}\mid\B{\theta},\B{b}) } = \frac{r( \B{\theta}\mid\B{\theta}' ) z(\B{x}',\B\theta')\pi(\B\theta'|\B{b})}{r(\B{\theta}'\mid \B{\theta}) z(\B{x},\B\theta)\pi(\B\theta|\B{b})}, \]
which gives the desired result.
\end{proof}
}

The meaning of this result is clear if we combine~\cref{eqn:moments} and~\cite[Proposition 2]{brown2016computational} to obtain
\begin{equation}\label{eqn:expw}
	\mathbb{E}_{\hat\pi(\B{x}\mid\B{b},\B\theta)}\,[z(\B{x},\B\theta)] = M_1^{-1}(\B\theta) \mathbb{E}_{\hat\pi(\B{x}\mid\B{b},\B\theta)}\,[w(\B{x},\B\theta)] = 1.
\end{equation}
{In other words, given the current state $(\B{x}, \B\theta)$ and a proposed state $\B{\theta}^\prime$, the ratio in \cref{ALG:OneBlockApprox}, when averaged over $\hat{\pi}(\B{x}^\prime \mid \B{\theta}^\prime, \B{b})$, is approximately that of the one-block algorithm, provided $z(\B{x}, \B\theta) \approx 1$.} Observe that if one takes $\condh = \cond$, so that $z(\B{x},\B\theta) = 1$, then
\[
\rho_2(\B{x}',\B{\theta}'; \B{x},\B{\theta}) =  \min\left\{1, \frac{ \pi(\B{\theta}'\mid \B{b})r( \B{\theta}\mid\B{\theta}' )}{\pi(\B{\theta}\mid \B{b})r(\B{\theta}'\mid \B{\theta})}\right\},
\]
which is exactly the one-block acceptance ratio \cref{OneBlockRatio}. Similarly, if $\condh \approx \cond$ is a sufficiently accurate approximation, then $z(\B{x},\B\theta) \approx 1$, and AOB will closely approximate the One-Block algorithm.
\subsection{Analysis of the ABDA acceptance ratios}

Next, we discuss the acceptance ratio of ABDA (\cref{ALG:OneBlockApproxDA}). In the first stage, one-block is applied to $\hat\pi(\B{x},\B{\theta} \mid \B{b})$, yielding the acceptance ratio \cref{eqn:DAOneBlockRatio} and target distribution $\hat\pi(\B\theta\mid \B{b})$. The analysis for the acceptance rate at the second stage {is provided in \cref{p_twostage}.}
\begin{prop}\label{p_twostage}
Let $(\B{x},\B\theta)$ denote the current state of the ABDA chain and let $(\B{x}',\B\theta')$ be the proposed state. 
{With  $w(\B{x},\B{\theta})$ defined in~\cref{eqn:wx}, the acceptance ratio at the second stage is}
	\[ \rho_3(\B{x}',\B\theta';\B{x},\B\theta) = \min\left\{ 1, \frac{w(\B{x}',\B\theta')}{w(\B{x},\B\theta)} \right\}.\]
\end{prop}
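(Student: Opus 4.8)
The plan is to specialize the Metropolis--Hastings acceptance ratio~\cref{eqn:DAOneBlockRatio2} to the composite proposal density $q(\B{x}',\B{\theta}'\mid\B{x},\B{\theta})$ defined in~\cref{e_efftrans}, using the fact that on a genuinely promoted proposal we have $(\B{x}',\B{\theta}') \neq (\B{x},\B{\theta})$, so the Dirac term in~\cref{e_efftrans} vanishes and $q(\B{x}',\B\theta'\mid\B{x},\B\theta) = \hat\rho_1(\B{\theta}';\B{\theta})\,\hat\pi(\B{x}'\mid\B{\theta}',\B{b})\,r(\B{\theta}'\mid\B{\theta})$, with the analogous expression for the reverse move $q(\B{x},\B\theta\mid\B{x}',\B\theta')$. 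First I would substitute these into the ratio inside $\rho_3$ to get
\[
\frac{\pi(\B{x}',\B{\theta}'\mid\B{b})}{\pi(\B{x},\B{\theta}\mid\B{b})}\cdot
\frac{\hat\rho_1(\B{\theta};\B{\theta}')\,\hat\pi(\B{x}\mid\B{\theta},\B{b})\,r(\B{\theta}\mid\B{\theta}')}
{\hat\rho_1(\B{\theta}';\B{\theta})\,\hat\pi(\B{x}'\mid\B{\theta}',\B{b})\,r(\B{\theta}'\mid\B{\theta})}.
\]

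The next step is to expand $\hat\rho_1$ using its closed form~\cref{eqn:DAOneBlockRatio}. Here I would use the standard $\min\{1,u\}$ reversibility identity: for the Metropolis ratio $\hat\rho_1(\B{\theta}';\B{\theta}) = \min\{1, H(\B{\theta}',\B{\theta})\}$ with $H(\B{\theta}',\B{\theta}) = \hat\pi(\B{\theta}'\mid\B{b})r(\B{\theta}\mid\B{\theta}')/[\hat\pi(\B{\theta}\mid\B{b})r(\B{\theta}'\mid\B{\theta})]$, we have $H(\B{\theta},\B{\theta}') = 1/H(\B{\theta}',\B{\theta})$, and hence $\hat\rho_1(\B{\theta};\B{\theta}')/\hat\rho_1(\B{\theta}';\B{\theta}) = H(\B{\theta},\B{\theta}')$ whenever $\hat\rho_1(\B{\theta}';\B{\theta}) > 0$ (which is guaranteed, else the state would not have been promoted). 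Writing $H(\B{\theta},\B{\theta}')$ out explicitly, the factor becomes $\hat\pi(\B{\theta}\mid\B{b})r(\B{\theta}'\mid\B{\theta})/[\hat\pi(\B{\theta}'\mid\B{b})r(\B{\theta}\mid\B{\theta}')]$, and crucially the random-walk densities $r(\B{\theta}'\mid\B{\theta})$ and $r(\B{\theta}\mid\B{\theta}')$ now cancel against the ones already present in the ratio above.

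After cancellation the ratio reduces to
\[
\frac{\pi(\B{x}',\B{\theta}'\mid\B{b})}{\pi(\B{x},\B{\theta}\mid\B{b})}\cdot
\frac{\hat\pi(\B{x}\mid\B{\theta},\B{b})}{\hat\pi(\B{x}'\mid\B{\theta}',\B{b})}\cdot
\frac{\hat\pi(\B{\theta}\mid\B{b})}{\hat\pi(\B{\theta}'\mid\B{b})}.
\]
Now I would use $\pi(\B{x},\B{\theta}\mid\B{b}) = \pi(\B{x}\mid\B{\theta},\B{b})\pi(\B{\theta}\mid\B{b})$ together with the identity $\pi(\B{x}\mid\B{\theta},\B{b})/\hat\pi(\B{x}\mid\B{\theta},\B{b}) = z(\B{x},\B\theta) = w(\B{x},\B\theta)/M_1(\B\theta)$ from the proof of~\cref{p_accept1} (i.e.\ \cite[Proposition 1]{brown2016computational}), so that $\pi(\B{x},\B{\theta}\mid\B{b})/\hat\pi(\B{x}\mid\B{\theta},\B{b}) = z(\B{x},\B\theta)\pi(\B{\theta}\mid\B{b})$; likewise $\hat\pi(\B{\theta}\mid\B{b}) = \hat\pi(\B{x},\B{\theta}\mid\B{b})/\hat\pi(\B{x}\mid\B{\theta},\B{b})$, but it is cleaner to note that what survives is
\[
\frac{z(\B{x}',\B\theta')\,\pi(\B{\theta}'\mid\B{b})\,\hat\pi(\B{\theta}\mid\B{b})}
{z(\B{x},\B\theta)\,\pi(\B{\theta}\mid\B{b})\,\hat\pi(\B{\theta}'\mid\B{b})},
\]
and then observe that the "exact marginal / approximate marginal" mismatch cancels exactly: by~\cref{marginal} and the analogous expression for $\hat\pi(\B\theta\mid\B{b})$ (which differs only through $\condh$ in place of $\cond$ and the determinant), one has $\pi(\B{\theta}\mid\B{b})/\hat\pi(\B{\theta}\mid\B{b}) = M_1(\B\theta)$, whence $z(\B{x},\B\theta)\,\pi(\B{\theta}\mid\B{b})/\hat\pi(\B{\theta}\mid\B{b}) = z(\B{x},\B\theta) M_1(\B\theta) = w(\B{x},\B\theta)$. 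The ratio collapses to $w(\B{x}',\B\theta')/w(\B{x},\B\theta)$, giving $\rho_3 = \min\{1, w(\B{x}',\B\theta')/w(\B{x},\B\theta)\}$. I expect the main obstacle to be bookkeeping: correctly handling the two-stage structure so that the reversal $\hat\rho_1(\B{\theta};\B{\theta}')/\hat\rho_1(\B{\theta}';\B{\theta})$ is legitimately replaced by $H(\B{\theta},\B{\theta}')$ (which requires knowing the promoted state was not a rejection, i.e.\ $\hat\rho_1 > 0$), and then correctly identifying which ``$M_1$'' and ``$z$'' factors arise from the conditional mismatch versus the marginal mismatch so that everything telescopes to the clean form; the individual algebraic manipulations are routine once the identities $\pi(\B{x}\mid\B\theta,\B{b})/\hat\pi(\B{x}\mid\B\theta,\B{b}) = w/M_1$ and $\pi(\B\theta\mid\B{b})/\hat\pi(\B\theta\mid\B{b}) = M_1$ are in hand.
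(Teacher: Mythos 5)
Your proof is correct and follows essentially the same route as the paper's: cancel the Stage-1 acceptance probabilities via the $\min\{1,\gamma\}/\min\{1,\gamma^{-1}\}=\gamma$ identity to reduce $\rho_3$ to $\min\bigl\{1,\tfrac{\pi(\B{x}',\B\theta'\mid\B{b})\hat\pi(\B{x},\B\theta\mid\B{b})}{\pi(\B{x},\B\theta\mid\B{b})\hat\pi(\B{x}',\B\theta'\mid\B{b})}\bigr\}$, then factor each posterior-to-approximation ratio as $z(\B{x},\B\theta)\cdot\pi(\B\theta\mid\B{b})/\hat\pi(\B\theta\mid\B{b})$. In fact your bookkeeping of the marginal ratio is the correct one --- $\pi(\B\theta\mid\B{b})/\hat\pi(\B\theta\mid\B{b})=M_1(\B\theta)$ (up to a $\B\theta$-independent constant), so that $z\,M_1=w$ --- whereas the paper's proof states the reciprocal identity $M_1(\B\theta)=\hat\pi(\B\theta\mid\B{b})/\pi(\B\theta\mid\B{b})$, which as written would not telescope to $w(\B{x}',\B\theta')/w(\B{x},\B\theta)$ and is evidently a typo.
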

\begin{proof}
First, note that  $\hat\rho_1(\B{\theta}';\B{\theta})$, defined in \cref{eqn:DAOneBlockRatio}, can be expressed as $\min\{ 1,\gamma\}$, where
\[
\gamma \equiv \frac{\hat\pi(\B{\theta}'\mid\B{b})r( \B{\theta}\mid \B{\theta}')}{\hat\pi(\B{\theta}\mid\B{b})r(\B{\theta}'\mid \B{\theta})} .
\]
Similarly, $\hat\rho_1(\B{\theta};\B{\theta}') =  \min\{ 1,\gamma^{-1}\}$. But  if $\gamma \neq 0$ then  $\min\{1,\gamma\}/\min\{1,\gamma^{-1}\} = \gamma$. Therefore, the ratio  $\hat\rho_1(\B{\theta}';\B{\theta})/\hat\rho_1(\B{\theta};\B{\theta}')$ simplifies to $\gamma$, from which it follows that
\[
\frac{q(\B{x},\B{\theta}\mid \B{x}',\B{\theta}')}{q(\B{x}',\B{\theta}'\mid \B{x},\B{\theta})} = \frac{\hat\pi(\B{x}\mid \B\theta,\B{b})\hat\pi(\B{\theta}'\mid\B{b})}{\hat\pi(\B{x}'\mid \B\theta',\B{b})\hat\pi(\B{\theta}\mid\B{b})},
\]
where $q$ is defined by~\cref{e_efftrans}. Thus, the ratio appearing in~\cref{eqn:DAOneBlockRatio2} simplifies to
\begin{align*}
	 \frac{\pi(\B{x}',\B{\theta}' \mid \B{b})q(\B{x},\B{\theta}\mid \B{x}',\B{\theta}')}{\pi(\B{x},\B{\theta} \mid \B{b})q(\B{x}',\B{\theta}'\mid \B{x},\B{\theta}) } = &  \frac{\pi(\B{x}',\B{\theta}' \mid \B{b})\hat\pi(\B{x}\mid\B{\theta},\B{b})\hat\pi(\B{\theta}\mid\B{b}) }{\pi(\B{x},\B{\theta} \mid \B{b})\hat\pi(\B{x}'\mid\B{\theta}',\B{b})\hat\pi(\B{\theta}'\mid\B{b})}  \\
= & \frac{\pi(\B{x}',\B{\theta}' \mid \B{b})\hat\pi(\B{x},\B{\theta}\mid\B{b}) }{\pi(\B{x},\B{\theta} \mid \B{b})\hat\pi(\B{x}',\B{\theta}'\mid\B{b})}.
\end{align*}
{Therefore, the acceptance ratio at the  second stage is
\begin{equation}
\label{eqn:DAOneBlockRatio2b}
\rho_3(\B{x}',\B{\theta}';\B{x},\B{\theta}) = \min\left\{1,\frac{\pi(\B{x}',\B{\theta}' \mid \B{b})\hat\pi(\B{x},\B{\theta}\mid\B{b}) }{\pi(\B{x},\B{\theta} \mid \B{b})\hat\pi(\B{x}',\B{\theta}'\mid\B{b})} \right\}.
\end{equation}
From this equation is clear that we only need to focus on the ratio of the exact to approximate posterior densities $ \pi(\B{x},\B{\theta} \mid \B{b})/\hat\pi(\B{x},\B{\theta}\mid\B{b})$ which simplifies according to~\cref{e_inter1}. Moreover, it is straightforward to verify that $M_1(\B\theta)=\hat\pi(\B\theta\mid\B{b})/\pi(\B\theta\mid\B{b})$, where $M_1(\B\theta)$ is defined in \cref{eqn:nt}. Substituting these two results into \cref{eqn:DAOneBlockRatio2b} and recalling that $z(\B{x},\B\theta)=w(\B{x},\B\theta)/M_1(\B\theta)$, we obtain the desired result.}

\end{proof}

In this analysis, \cref{eqn:DAOneBlockRatio2b} shows that~\cref{ALG:OneBlockApproxDA} amounts to generating a proposal from the approximate posterior distribution $\hat\pi(\B{x},\B\theta\mid\B{b})$ in a Metropolis-Hastings independence sampler. Compared to~\cref{ALG:OneBlockApprox}, we expect~\cref{ALG:OneBlockApproxDA} to have lower statistical efficiency but with much improved computational efficiency, since the posterior distribution $\pi(\B{x},\B\theta\mid\B{b})$ is only evaluated when a ``good'' sample has been drawn.


{\cref{p_twostage} states that the acceptance ratio at the second stage is close to $1$} if $\hat\pi(\B{x},\B\theta\mid\B{b})$ is a good approximation to $\pi(\B{x}, \B\theta\mid\B{b})$. The value of the ABDA algorithm is seen by observing that regardless of the choice of proposal distribution for $\B{\theta}$, the acceptance rate at the second stage will be high when $\hat{\pi}(\B{x}, \B{\theta} \mid \B{b})$ closely approximates the true target density $\pi(\B{x}, \B{\theta} \mid \B{b})$. For instance, if a poor proposal density is used for $\B{\theta}$, then many bad states $\B{\theta}^\prime$ will likely be proposed, but they will be discarded without wasting the computational effort to evaluate $\pi(\B{x}^\prime, \B{\theta}^\prime \mid \B{b})$. Provided we construct $\hat{\pi}(\B{x}, \B{\theta} \mid \B{b})$ to closely approximate $\pi(\B{x}, \B{\theta} \mid \B{b})$ for all $(\B{x}, \B{\theta}) \in \text{supp} ~\pi$, we can be confident that the true density will only be evaluated for states that have a high chance of being accepted. Further, using the approximate density in the second stage reduces the problem of tuning a Metropolis(-Hastings) algorithm to one of tuning the proposal $r(\B{\theta}^\prime \mid \B{\theta})$, which is easier to do when $\B{\theta}$ is low-dimensional.

\subsection{Analysis of the Pseudo-marginal acceptance ratio}

We turn to the analysis of the Pseudo-marginal algorithm (\cref{ALG:PseudoMarginal}). It is worth recalling that for $K=1$, the Pseudo-marginal algorithm reduces to AOB.
\begin{prop}\label{p_pseudo}
{In~\cref{ALG:PseudoMarginal}, the acceptance ratio simplifies to}
	\[ \rho_4(\B{\theta}'; \B{\theta}) =  \min\left\{1, \frac{z_K(\B{\theta}') \pi(\B{\theta}'\mid \B{b})r( \B{\theta}\mid\B{\theta}' )}{z_K(\B{\theta})\pi(\B{\theta}\mid \B{b})r(\B{\theta}'\mid \B{\theta})}\right\}, \]
	where $z_K(\B{\theta}) \equiv \frac1K \sum_{j=1}^Kz(\B{x}^j,\B\theta)$.
\end{prop}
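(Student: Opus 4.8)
The plan is to start directly from the acceptance ratio \cref{PseudoMargRatio} and substitute the definition of $\pi^K(\B\theta\mid\B{b})$ from \cref{pK}. As in the proof of \cref{p_accept1}, the only algebraic facts needed are that the posterior factors as $\pi(\B{x},\B\theta\mid\B{b}) = \pi(\B{x}\mid\B\theta,\B{b})\,\pi(\B\theta\mid\B{b})$, and that the ratio of the exact to the approximate conditional density equals $z(\B{x},\B\theta)$, i.e.\ $\pi(\B{x}\mid\B\theta,\B{b})/\hat\pi(\B{x}\mid\B\theta,\B{b}) = z(\B{x},\B\theta)$, which was established in the proof of \cite[Proposition 1]{brown2016computational}.

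First I would rewrite each summand in \cref{pK}: for $\B{x}'_j\sim\hat\pi(\B{x}'\mid\B{b},\B\theta')$,
\[
\frac{\pi(\B{x}'_j,\B\theta'\mid\B{b})}{\hat\pi(\B{x}'_j\mid\B{b},\B\theta')}
= \frac{\pi(\B{x}'_j\mid\B\theta',\B{b})}{\hat\pi(\B{x}'_j\mid\B{b},\B\theta')}\,\pi(\B\theta'\mid\B{b})
= z(\B{x}'_j,\B\theta')\,\pi(\B\theta'\mid\B{b}).
\]
Summing over $j$ and dividing by $K$ then shows $\pi^K(\B\theta'\mid\B{b}) = \pi(\B\theta'\mid\B{b})\,z_K(\B\theta')$, where $z_K(\B\theta')=\frac1K\sum_{j=1}^K z(\B{x}'_j,\B\theta')$ is exactly the quantity in the statement (the $\B{x}^j$ there being the importance draws $\B{x}'_j$ associated with the proposed $\B\theta'$). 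Applying the same identity at the current state gives $\pi^K(\B\theta\mid\B{b}) = \pi(\B\theta\mid\B{b})\,z_K(\B\theta)$, the importance draws now being those carried over from the previous iteration (the reuse noted after \cref{ALG:PseudoMarginal}).

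Substituting both expressions into \cref{PseudoMargRatio}, the common factors $\pi(\B\theta'\mid\B{b})$, $\pi(\B\theta\mid\B{b})$ do not cancel but combine with the proposal densities to give
\[
\frac{z_K(\B\theta')\,\pi(\B\theta'\mid\B{b})\,r(\B\theta\mid\B\theta')}{z_K(\B\theta)\,\pi(\B\theta\mid\B{b})\,r(\B\theta'\mid\B\theta)},
\]
which is the claimed simplification. There is no genuine obstacle: the result is a one-line computation once the factorization and the identity for $z(\B{x},\B\theta)$ from \cref{p_accept1} are in hand. The only point needing a sentence of care is the bookkeeping — verifying that the importance samples appearing in $z_K(\B\theta)$ and $z_K(\B\theta')$ are precisely the ones used by \cref{ALG:PseudoMarginal} at the respective states — which follows immediately from the algorithm's description. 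As a consistency check, $K=1$ recovers \cref{p_accept1}, and as $K\to\infty$ one has $z_K(\B\theta)\to\mathbb{E}_{\hat\pi(\B{x}\mid\B{b},\B\theta)}[z(\B{x},\B\theta)]=1$ by \cref{eqn:expw}, so $\rho_4$ converges to the one-block ratio \cref{OneBlockRatio}, as expected.
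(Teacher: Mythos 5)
Your proof is correct and is precisely the argument the paper has in mind when it says the proof of \cref{p_pseudo} ``is similar to \cref{p_accept1} and is omitted'': you factor the posterior as $\pi(\B{x},\B\theta\mid\B{b})=\pi(\B{x}\mid\B\theta,\B{b})\pi(\B\theta\mid\B{b})$, use $\pi(\B{x}\mid\B\theta,\B{b})/\hat\pi(\B{x}\mid\B\theta,\B{b})=z(\B{x},\B\theta)$ to get $\pi^K(\B\theta\mid\B{b})=z_K(\B\theta)\pi(\B\theta\mid\B{b})$, and substitute into \cref{PseudoMargRatio}. The bookkeeping remark about which importance draws enter $z_K(\B\theta)$ versus $z_K(\B\theta')$, and the $K=1$ and $K\to\infty$ consistency checks, are correct and a welcome addition.
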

\begin{proof}
	The proof is similar to \cref{p_accept1} and is omitted.
\end{proof}

More insight can be obtained by considering the asymptotic behavior of $z_K(\B{\theta})$. By~\cref{eqn:expw} and the strong law of large numbers, it follows that, for fixed $\B{\theta}$, $z_K(\B{\theta}) \stackrel{\text{a.s.}}{\rightarrow} 1$ as $K \rightarrow \infty$. The interpretation is that the Pseudo-marginal algorithm exhibits similar behavior as AOB with respect to one block. The difference is that, by taking $K \rightarrow \infty$, we attain almost sure convergence, as opposed to an average behavior. Further, by the Central Limit Theorem and the fact that $\B{x}^1, \ldots, \B{x}^K$ are independent for fixed $\B{\theta}$, we have that $\sqrt{K} (z_K(\B{\theta}) - 1) \stackrel{\mathcal{L}}{\rightarrow} \mathcal{N}(0,\sigma_z^2(\B{\theta}))$, where
\[
\sigma_z^2 (\B\theta) \equiv \text{Var}_{\hat\pi(\B{x}\mid \B\theta,\B{b})}[z(\B{x},\B\theta)] =  \frac{1}{M_1^2(\B\theta)}\left(\frac{1}{M_2(\B\theta)}-\frac{1}{M_1^2(\B\theta)}\right) < \infty, ~~\forall \B{\theta} ~(a.e.).  
\]
It follows from the Central Limit Theorem that $z_K(\B{\theta})$ is $\sqrt{K}-$consistent for 1 \cite{LehmannCasella98}; i.e.,
\begin{equation}\label{e_bigo}
z_K(\B\theta) - 1 = \mc{O}_p(K^{-1/2})
\end{equation}
as $K \rightarrow \infty$. However, it is important to observe that $\sigma_z(\B\theta)$ will be close to zero when the approximate distribution is close to the true distribution, in which case $z_K(\B\theta)$ will be close to one with high probability even for small $K$.

This analysis shows the advantage of the Pseudo-marginal approach, namely that we can still achieve desirable marginalization behavior in the presence of a poor approximation to the full conditional distribution of $\pi(\B{x},\B\theta\mid\B{b})$. If the approximation $\hat\pi(\B{x},\B\theta\mid\B{b})$ is not very good, one can set $K$ to be large to guarantee $z_K(\B\theta)\approx 1$, at the expense of drawing repeated realizations from the approximating distribution. Otherwise, if the approximation $\hat\pi(\B{x},\B\theta\mid\B{b})$ is good, $z_K(\B\theta)\approx 1$ even for small $K$. The trade-off is thus a large number of realizations from a poor approximation, or few (one in the case of AOB) realizations from a quality approximation. Regardless, if $z_K(\B\theta)\approx 1$, \cref{ALG:PseudoMarginal} will closely approximate the one-block algorithm.

\subsection{Approximate posterior distribution using low-rank approximation}\label{ss_lowrank}
We briefly review the low-rank approach for sampling from the conditional distribution $\pi(\B{x}\mid \B\theta,\B{b})$ { that was used in~\cite{brown2016computational}, and which previously appeared in \cite{Bui-ThanhGhattasMartinEtAl13,CalSom05,Chung13,Chung14,WilcoxGhattas,TC,IsaacPetraStadlerEtAl15}}. First, recall that $\condinv=\mu\B{A}^\top\B{A} +
\sigma \prior^{-1}$ and assume $\prior^{-1}=\B{L}^\top\B{L}$, so that
$$
\condinv= \B{L}^\top\left(\mu\B{L}^{-\top}\B{A}^\top\B{A}\B{L}^{-1} +
\sigma\B{I}\right)\B{L}.
$$
Next, define a rank$-k$ approximation as
\begin{equation}\label{e_lowrank}
\B{L}^{-\top}\B{A}^\top\B{AL}^{-1} \approx \B{V}_k\B\Lambda_k\B{V}^\top_k,
\end{equation}
where $\B\Lambda_k$ is diagonal with the $k$ largest eigvenvalues of $\B{L}^{-\top}\B{A}^\top\B{AL}^{-1}$ and $\B{V}_k$ contains orthonormal columns. Combining the previous two equations, we obtain the approximate covariance
$$
\condhinv \equiv ( \mu\B{L}^\top\B{V}_k\B\Lambda_k\B{V}_k^\top\B{L}+\sigma \prior^{-1}) =  \B{L}^\top(\mu\B{V}_k\B\Lambda_k\B{V}_k^\top+\sigma \B{I})\B{L}.
$$
Defining $\xcondh = \mu\condh\B{A}^\top\B{b}$, the approximate posterior distribution is obtained as
\begin{equation} \label{e_postapprox}
\widehat{\pi} (\B{x},\B\theta \mid \B{b}) \propto
f(\B\theta) \exp\left(-\frac{\mu}{2}\B{b}^\top \B{b} + \frac{\mu^2}{2}\B{b}^\top\B{A}\condh\B{A}^\top \B{b}  - \frac{1}{2} \|\B{x}-\xcondh\|_{\condhinv}^2 \right),
\end{equation}
where, $f(\B\theta) = \mu^{M/2}\sigma^{N/2}\pi(\B{\theta})$. The marginal density is
\begin{equation} \label{e_margapprox}
\hat\pi(\B{\theta}\mid \B{b}) \propto  \> \frac{f(\B\theta)}{\sqrt{\det(\condhinv)}} \exp\left(-\frac{\mu}{2}\B{b}^\top \B{b} + \frac{\mu^2}{2}\B{b}^\top\B{A}\condh\B{A}^\top \B{b}\right).
\end{equation}
To sample from the approximate conditional distribution, $\hat\pi(\B{\theta}\mid \B{b})$ we compute
\begin{equation}\label{e_condsample}
 \B{x} = \xcondh + \B{G\varepsilon} \qquad \B{\varepsilon} \sim \mathcal{N}(\B{0},\B{I}),
\end{equation}
where $\B{G} := \sigma^{-1/2} \B{L}^{-1} (\B{I} - \B{V}_k\widehat{\B{D}}_k\B{V}_k^\top)$ and $\widehat{\B{D}}_k = \B{I} \pm \left( \B{I}-\B{D}_k\right)^{1/2}$.
It is shown in~\cite{brown2016computational} that $\widehat{\B\Gamma}_{\text{cond}} = \B{GG}^\top$. Since $\widehat{\B{D}}_k$ is diagonal and $k \ll n$, \cref{e_condsample} provides a computationally cheap way of generating draws from $\widehat{\pi} (\B{x} \mid \B{b},\B\theta)$.

\paragraph{Acceptance ratio analysis}
From~\cite[Proposition 2]{brown2016computational}, the expression for $w(\B{x},\B\theta)$ simplifies considerably. First,
\begin{equation}\label{e_inter2}
\condinv -\condhinv = \mu\B{L}^\top\left(\sum_{j=k+1}^N\lambda_j\B{v}_j\B{v}_j^\top \right)\B{L},
\end{equation}
implying $w(\B{x},\B\theta) = \exp\left(-\frac12 \sum_{j =k+1}^N \lambda_j(\B{v}_j^\top \B{Lx})^2  \right)$, which in turn implies
\cite[Theorem 1]{brown2016computational}
\[ M_m(\B\theta) = \exp\left( \frac{\mu^2}{2\sigma}\sum_{j=k+1}^n \frac{m\mu\lambda_j}{m\mu\lambda_j + \sigma}(\B{b}^\top\B{A}\B{L}^{-1}\B{v}_j)^2\right)\prod_{j=k+1}^n \left( 1 + \frac{m\mu}{\sigma}\lambda_j\right)^{1/2}.\]
The key insight from this calculation is that $z(\B{x},\B\theta) = w(\B{x},\B\theta)/ M_1(\B\theta)$ only contains eigenvalues that are discarded in the low-rank approximation. This means if the discarded eigenvalues are very small, AOB is very close to the one-block algorithm. Similarly, for the ABDA algorithm, this implies the acceptance ratio of the second stage is very high and for the Pseudo-marginal MCMC, the variance $\sigma_z(\B\theta)$ is close to zero.

In practice, computing the truncated SVD can be computationally expensive. However, the low-rank decomposition can be approximately computed using a Krylov subspace approach~\cite{simon2000low}, or using a randomized approach~\cite{halko2011finding}. These approximate factorizations  can be used instead to construct the approximating distribution, such as the conditional $\widehat\pi(\B{x}\mid\B{b},\B\theta)$, the marginal $\widehat\pi(\B\theta\mid\B{b})$, and the full posterior distribution $\widehat\pi(\B{x},\B{b}\mid\B\theta)$. The details are given in~\cite{brown2016computational}.

\subsection{Computational cost} We briefly review the computational cost of the three algorithms proposed in this paper. Denote the computational cost of forming the matrix-vector product (henceforth, referred to as matvec) with $\B{A}$ by $T_{\B{A}}$. Similarly  denote the cost of a matvec with $\B{L}$ and $\B{L}^{-1}$ as $T_{\B{L}}$ and $T_{\B{L}^{-1}}$, respectively. To simplify the analysis, we make two assumptions: first, the cost of the matvecs dominates the computational cost, and second, the cost of the transpose operations of the respective matrices is the same as that of the original matrix.
\begin{table}[tb]\centering
\begin{tabular}{lll}
Component & Dominant Cost & Additional Cost \\ \hline
	Pre-computation & $2k(T_{\B{A}} + T_{\B{L}^{-1}}  )$ & $\mathcal{O}(nk^2)$ \\
Sample & $T_{\B{L}^{-1}}$ & $\mathcal{O}(nk)$ \\
Acceptance ratio & $T_{\B{A}} + T_{\B{L}}$ & $\mathcal{O}(nk)$.
\end{tabular}
\caption{Summary of various computational costs in the Approximate One-Block Algorithm}
\end{table}

\paragraph{AOB algorithm} We first analyze the computational cost associated with AOB (\cref{ALG:OneBlockApprox}). There are three major sources of computational cost. In the offline stage, we pre-compute a low-rank approximation as in~\cref{e_lowrank}. By using a Krylov subspace solver or randomized SVD algorithm, the dominant cost is in computing the matvecs, i.e., $2k(T_{\B{A}} + T_{\B{L}^{-1}} )$ floating point operations (flops), with an additional cost of $\mathcal{O}(nk^2)$ flops. In the online stage,  to generate a composite sample $(\B{x}',\B\theta')$, the major cost that depends on the dimension of the problem is due to~\cref{e_condsample} and can be quantified as $T_{\B{L}^{-1}}$ flops, with an additional $\mathcal{O}(nk)$ cost. Finally, computing the acceptance ratio requires one evaluation of the full posterior distribution, and one evaluation of the approximate posterior distribution. Evaluating the full posterior using~\cref{e_post2} requires $T_{\B{A}} + T_{\B{L}}$ flops, with an additional cost $\mathcal{O}(nk)$ flops, whereas evaluating the approximate posterior distribution using~\cref{e_postapprox} only requires $\mathcal{O}(nk)$ flops.

\paragraph{ABDA algorithm} The cost of ABDA, \cref{ALG:OneBlockApproxDA}, is the same as that of AOB, \cref{ALG:OneBlockApprox}, with only one notable exception. The extra step of evaluating the approximate marginal distribution~\cref{e_margapprox} requires an additional $2T_{\B{L}^{-1}} + \mathcal{O}(nk)$ flops. However, while the per-iteration cost of ABDA is comparable to that of AOB, the overall cost of ABDA can be lower. The reason is that the acceptance ratio involving the full posterior distribution is only evaluated in the second stage, unlike AOB in which the posterior distribution is evaluated at every iteration. The computational speedup has been demonstrated in numerical experiments, see \cref{sec:num}. Estimates of the computational speedup can be obtained by following the approach in~\cite{cui2011bayesian}, but we omit this discussion.

\paragraph{Pseudo-marginal algorithm} The cost of the Pseudomarginal approach (\cref{ALG:PseudoMarginal}) is the same as~\cref{ALG:OneBlockApprox} {with two exceptions:  $K$ samples need to be generated which costs $K [ T_{\B{L}^{-1}} + \mathcal{O}(nk) ]$ flops, and to evaluate the acceptance ratio, we need $K[T_{\B{A}} + T_{\B{L}} + \mathcal{O}(nk)]$ flops. The benefits of the Pseudo-marginal algorithm, i.e., more effective marginalization, has to be weighed against the additional cost of generating the samples and evaluating the acceptance ratio.}

\section{Numerical Experiments}\label{sec:num}
 Throughout this section, the proposal distribution for $\B{\theta}$ is taken to be the Adaptive Metropolis proposal~\cite{haario2001adaptive} using a lognormal proposal for $\B\theta$. For the prior and noise precision parameters, we assign a Gamma prior, $\text{Gamma}(1, 10^{-4})$, based on the recommendation of~\cite{bardsley2013efficient}.  If {\em a priori} knowledge is available regarding the distribution of $\B\theta$, this can be incorporated into the hyperpriors. It is also worth mentioning that besides the Gamma priors, other choices for the hyper-priors can also be made, e.g., proper Jeffreys. See~\cite{brown2016computational} for a detailed discussion. Besides these parameters, no other tuning parameters were necessary.

\subsection{1D Deblurring}\label{ssec:1d}
We take a simple 1D deblurring example to illustrate our algorithms. {This application arises from the discretization of a Fredholm integral equation of the first kind as in \cref{eqn:fred}. The details of this application are given elsewhere and we refer the reader to them~\cite{Bar11}. For the problem size we choose $N=128$ and add  Gaussian noise with variance $0.01^2\|\B{Ax}_\text{true}\|_{2}^2$ to simulate measurement error, where $\B{x}_\text{true}$ is the true vector which is assumed to be known. We model smoothness on $\B{x}$ {\em a priori} by taking $\Gamma_\text{prior}^{-1} = -\Delta $, where $-\Delta$ is the discrete Laplacian with Dirichlet boundary conditions~\cite{KaiSom05}.}

A low-rank approximation with $k=35$ computed using the ``exact'' SVD was used to define the approximate posterior distribution. We run $20,000$ iterations of the AOB algorithm. The first $10,000$ samples were considered to be the burn-in period and were discarded. The left panel of~\cref{fig:oned} shows the true vector $\B{x}_\text{true}$ and the blurred vector $\B{b}$ with the simulated measurement noise. The right panel of the same figure shows the approximate posterior mean superimposed on the true vector $\B{x}_\text{true}$; also plotted are the curves corresponding to the pointwise $95\%$ credible bounds. The plots show that the true image is mostly contained within the credible intervals, thereby providing a measure of confidence in the reconstruction process.

\begin{figure}[tb]\centering
\includegraphics[scale=0.3]{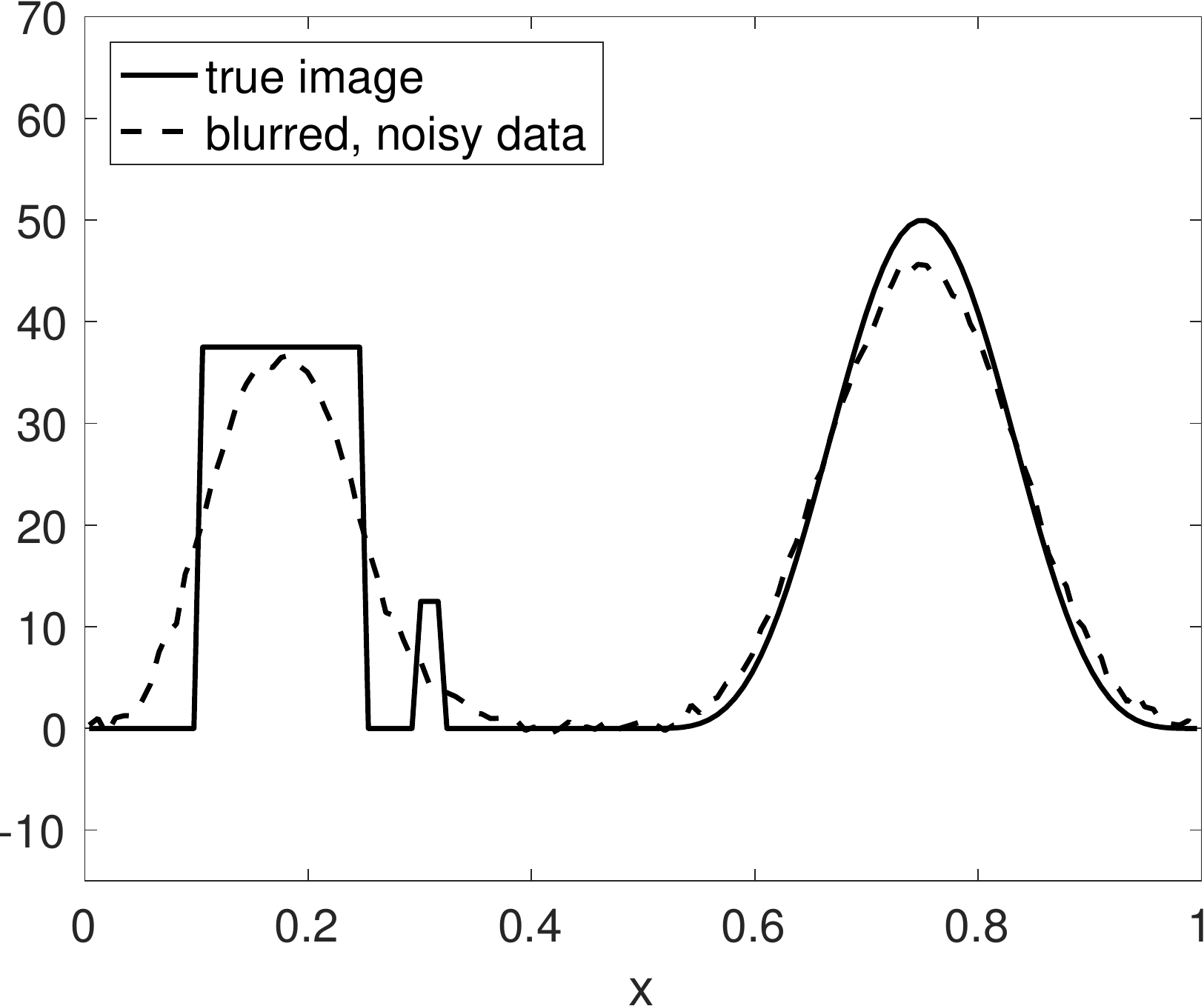}
\includegraphics[scale=0.3]{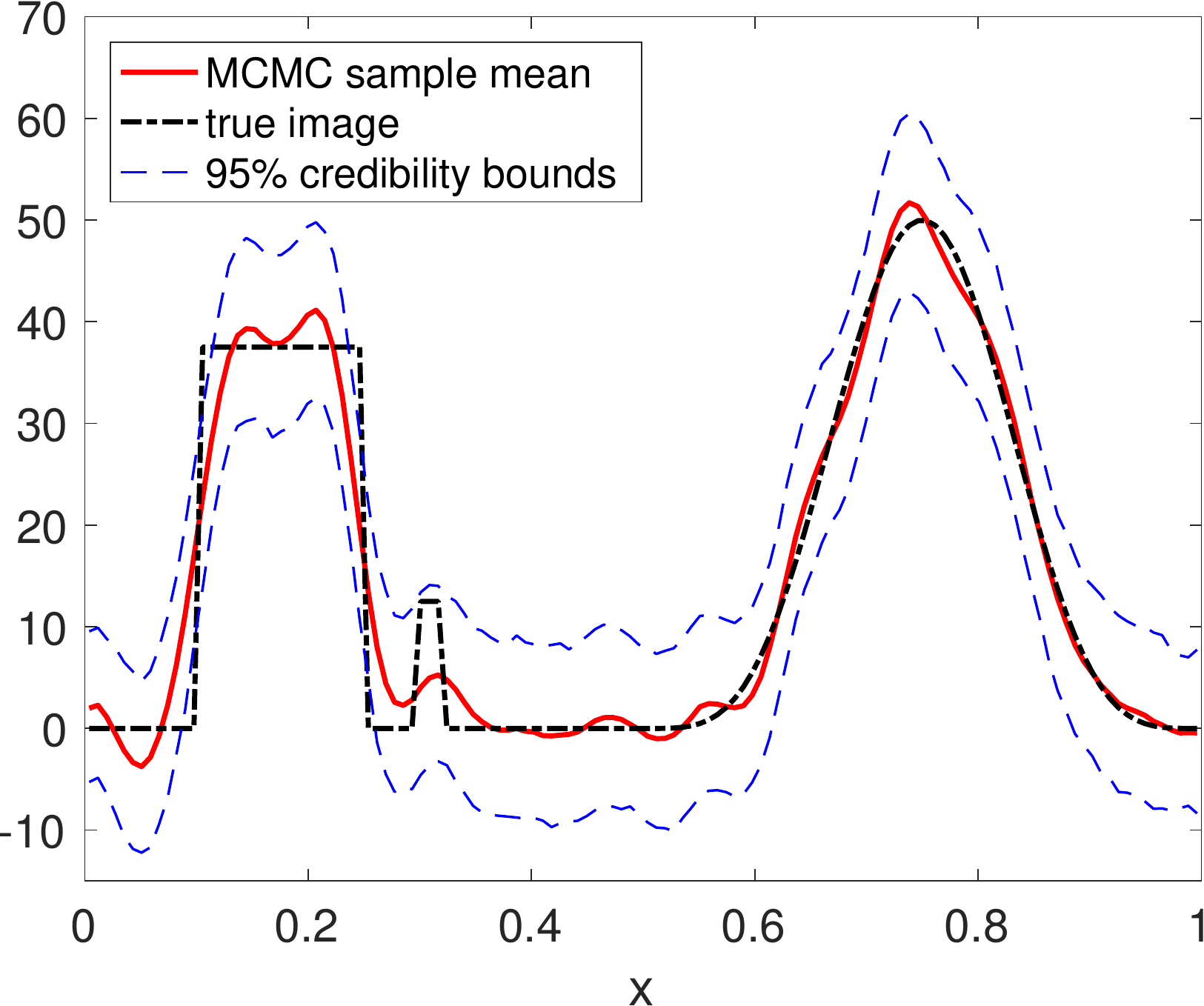}
\includegraphics[scale=0.3]{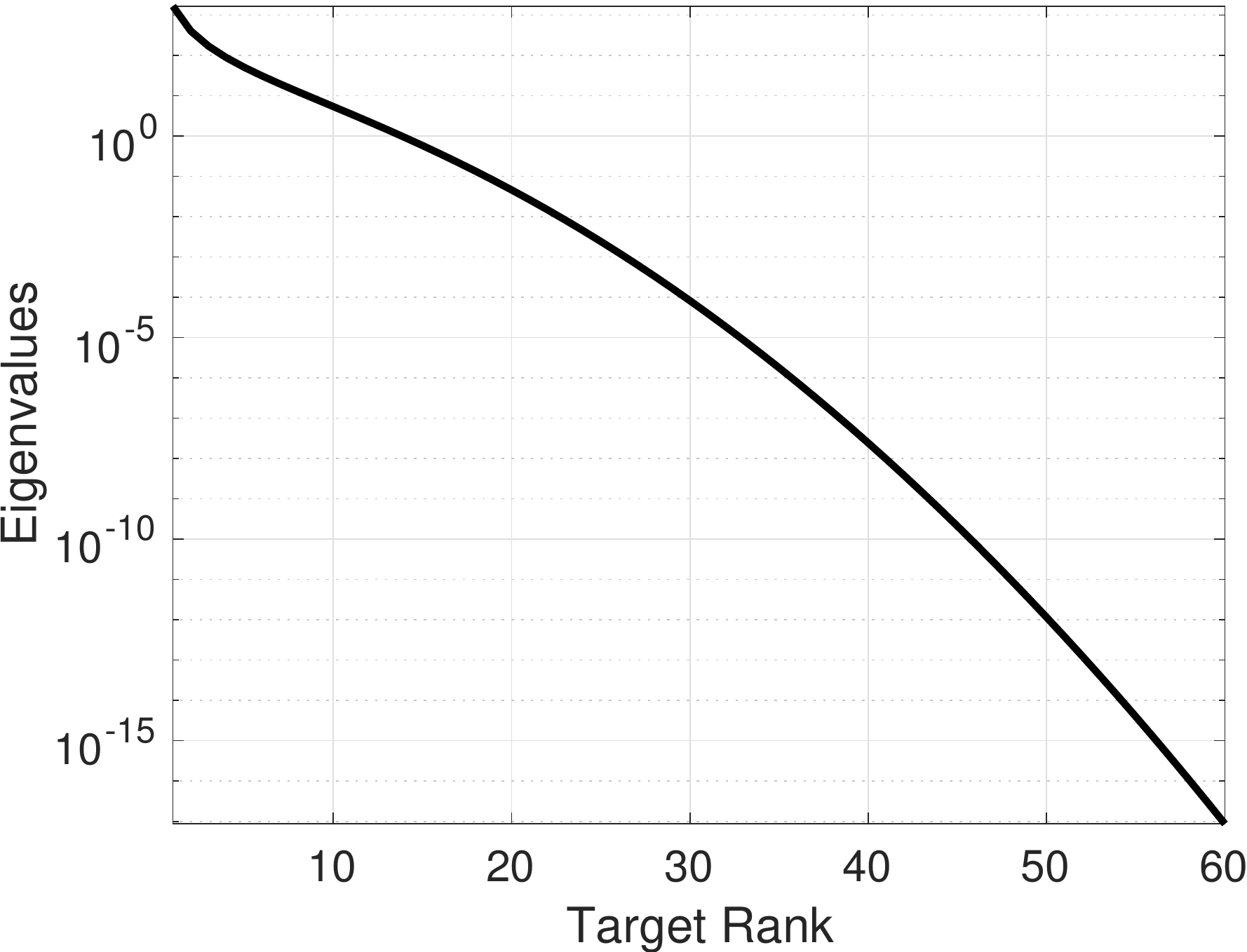}
\caption{(left) The true solution $\B{x}_\text{true}$ and the blurred vector $\B{b}$ with $2\%$ Gaussian noise to simulate measurement error. (center) The posterior mean superimposed with the pointwise $95\%$ credible bounds. (right) The spectrum of the matrix $\B{L}^{-\top}\B{A}^\top\B{A}\B{L}^{-1}$.}
\label{fig:oned}
\end{figure}

\paragraph{Diagnostics for convergence} We now provide some limited diagnostics to assess the convergence of the algorithm. In the upper left panel of~\cref{fig:oneddiag}, we display the trace plots of the precision parameters $\mu$ and $\sigma$ for three different chains with different initializations. It is seen that the chains appear to converge. Furthermore, the Multivariate Potential Scale Reduction Factor (MSPRF) \cite{BrooksGelman98} for the $\B{x}$ chain is $1.06$, which is within the recommend rule-of-thumb range (less than $1.1$). The top right panel of the same figure shows the histograms of the precision parameters. The (two-sided) Geweke test~\cite{bardsley2018} applied to the resulting  $\mu$- and  $\sigma$-chains yield $p$-values both greater than $0.98$, meaning that there is no reason to believe the chains are out of equilibrium. Further support for convergence is obtained by monitoring the cumulative averages of the $\mu$ and $\sigma$-chains plotted in the bottom panel of~\cref{fig:oneddiag}, from which it is readily seen that the cumulative averages converge. 
\begin{figure}[tb]\centering
\includegraphics[scale=0.3]{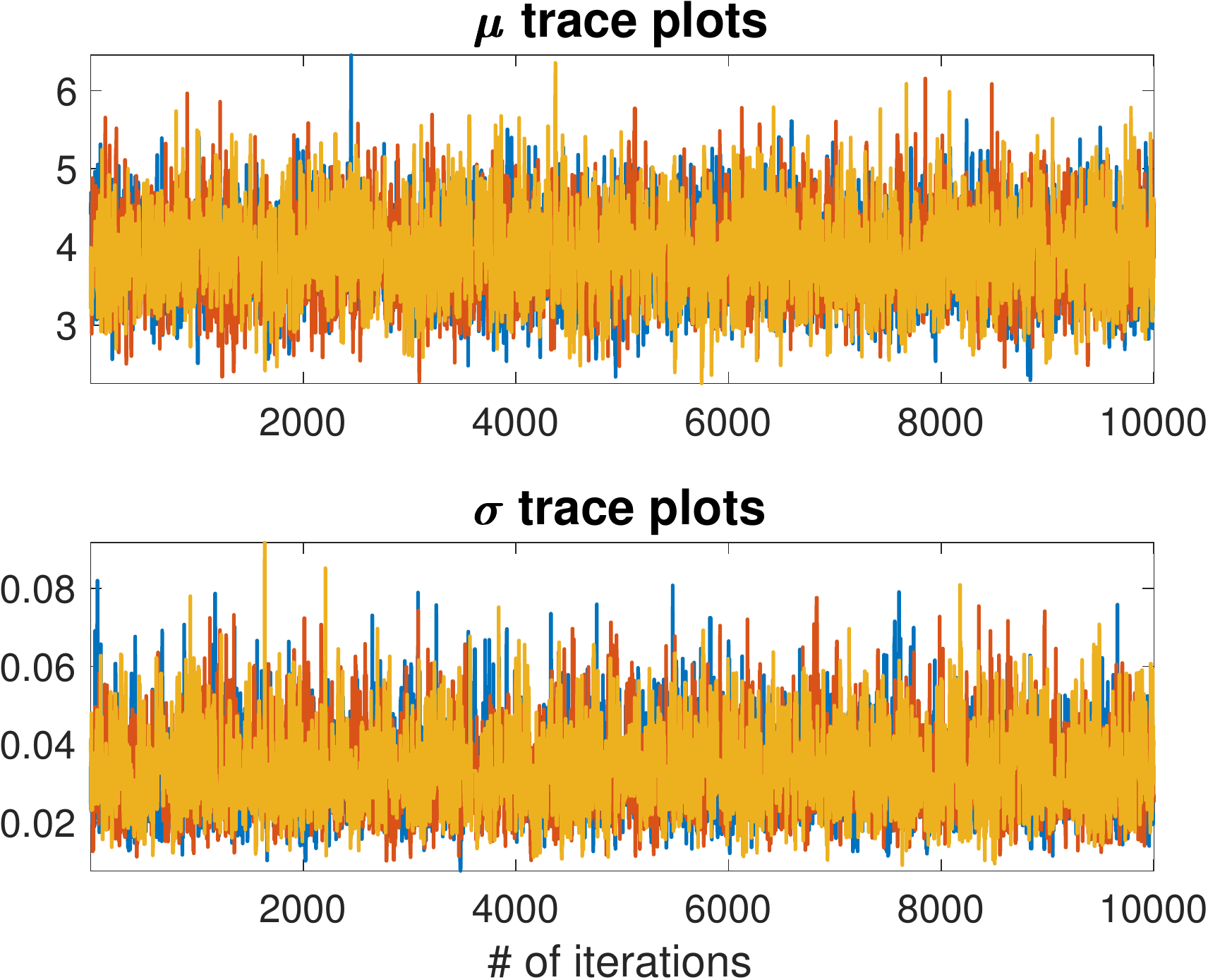}
\includegraphics[scale=0.3]{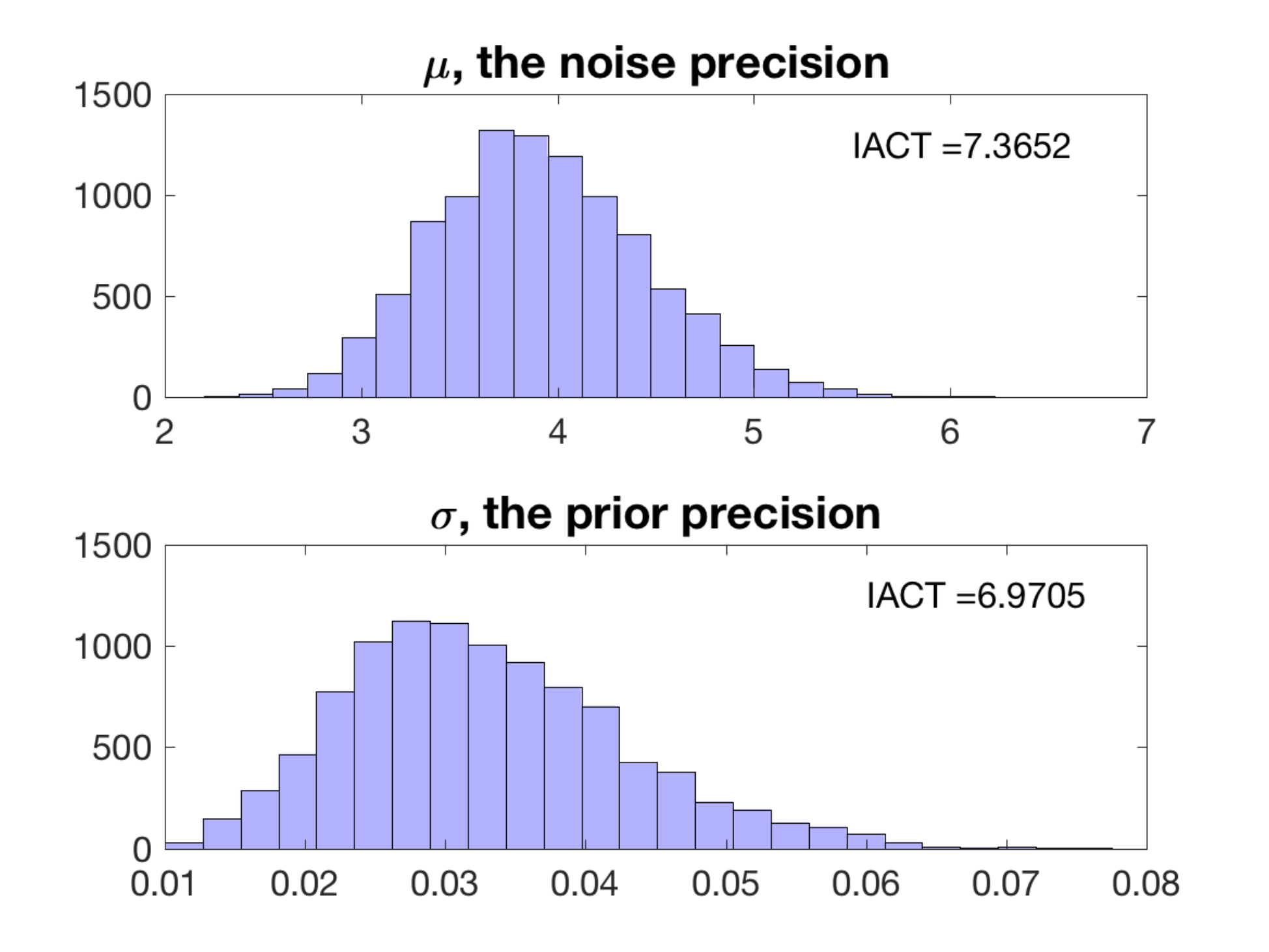}
\includegraphics[scale=0.4]{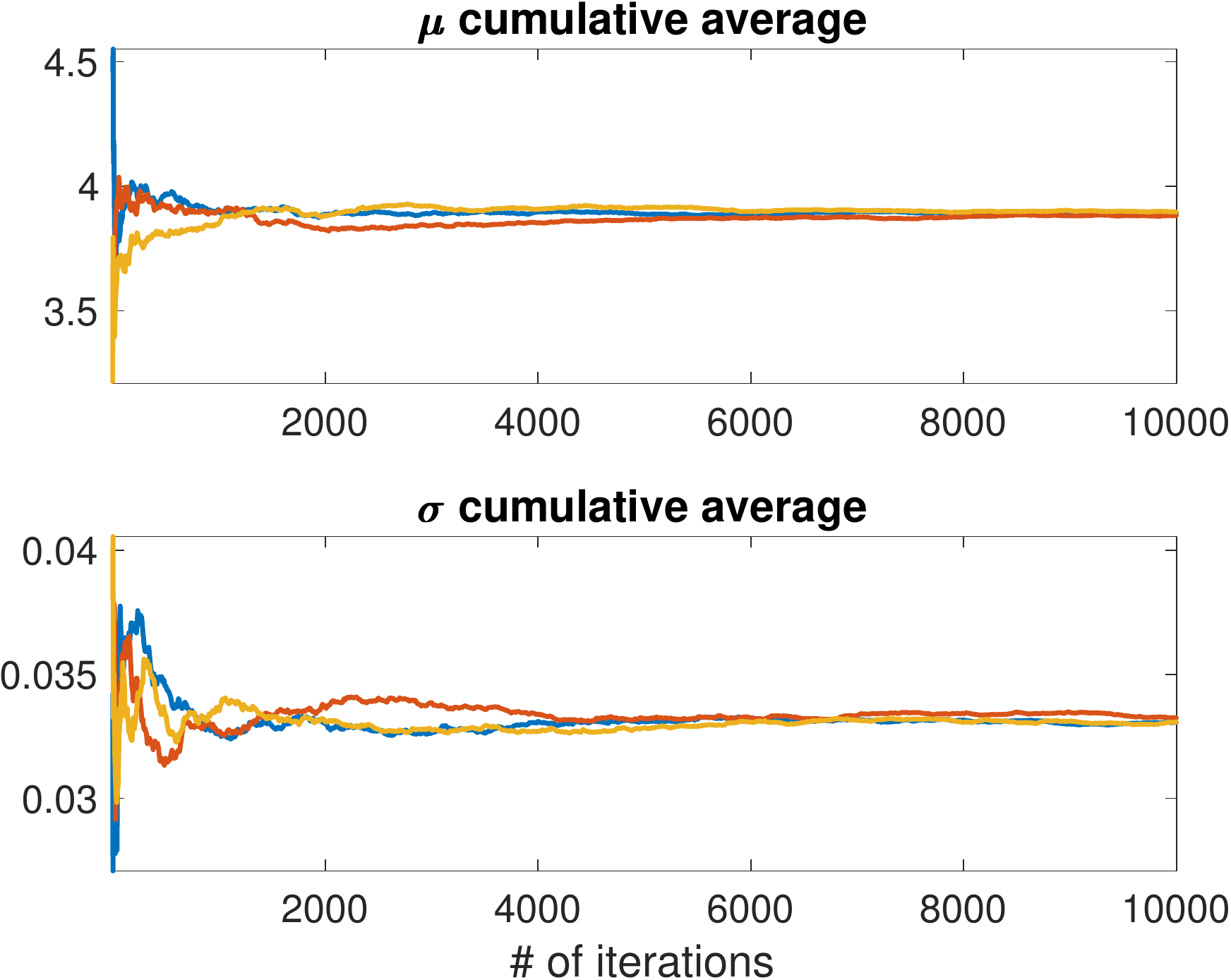}
\caption{ The (top left) trace plots and (top right) histograms of the $\mu$ and $\sigma$ chains for the $1$-D deblurring example using the AOB algorithm. (Bottom) Cumulative averages of the $\mu$ and $\sigma$ chains against the number of iterations. The target rank is $k=50$.}
\label{fig:oneddiag}
\end{figure}

In the right panel of~\cref{fig:ACF} (in~\cref{sec:review}), we show the empirical autocorrelation function of one of the $\sigma$ chains. As mentioned earlier, the autocorrelation decays rapidly. While we do not plot the autocorrelation of the $\mu$ chain, we observed that it had similar behavior. The integrated autocorrelation time (IACT) was $6.97$ for the $\sigma$-chain. The acceptance rate was $33.8\%$ and the effective sample size (ESS) was $1434.61$. Here, the ESS \cite{KassEtAl98} is defined as {ESS} $= {N_s}/\tau_\sigma$, where $N_s$ is the Monte Carlo sample size and $\tau_\sigma$ denotes IACT of the Markov chain.

\paragraph{Comparing different methods} We now compare all three algorithms proposed in~\cref{sec:alg}. The setup of the problem is the same as the previous experiment. The previous analysis shows that the AOB algorithm does a good job decorrelating the  $\sigma$-chain from the $\B{x}$ chain. We repeat this experiment for the other two methods as well.~\cref{fig:ACF2} plots the autocorrelation functions of the $\sigma$-chains produced by the other two methods, ABDA and PM with $K=5$. As is readily seen, both methods successfully reduce the autocorrelation in the $\sigma$-chains. Although we do not plot the $\mu$-chains, they behave similarly to the $\sigma$-chains.
\begin{figure}[tb]\centering
\includegraphics[scale=0.4]{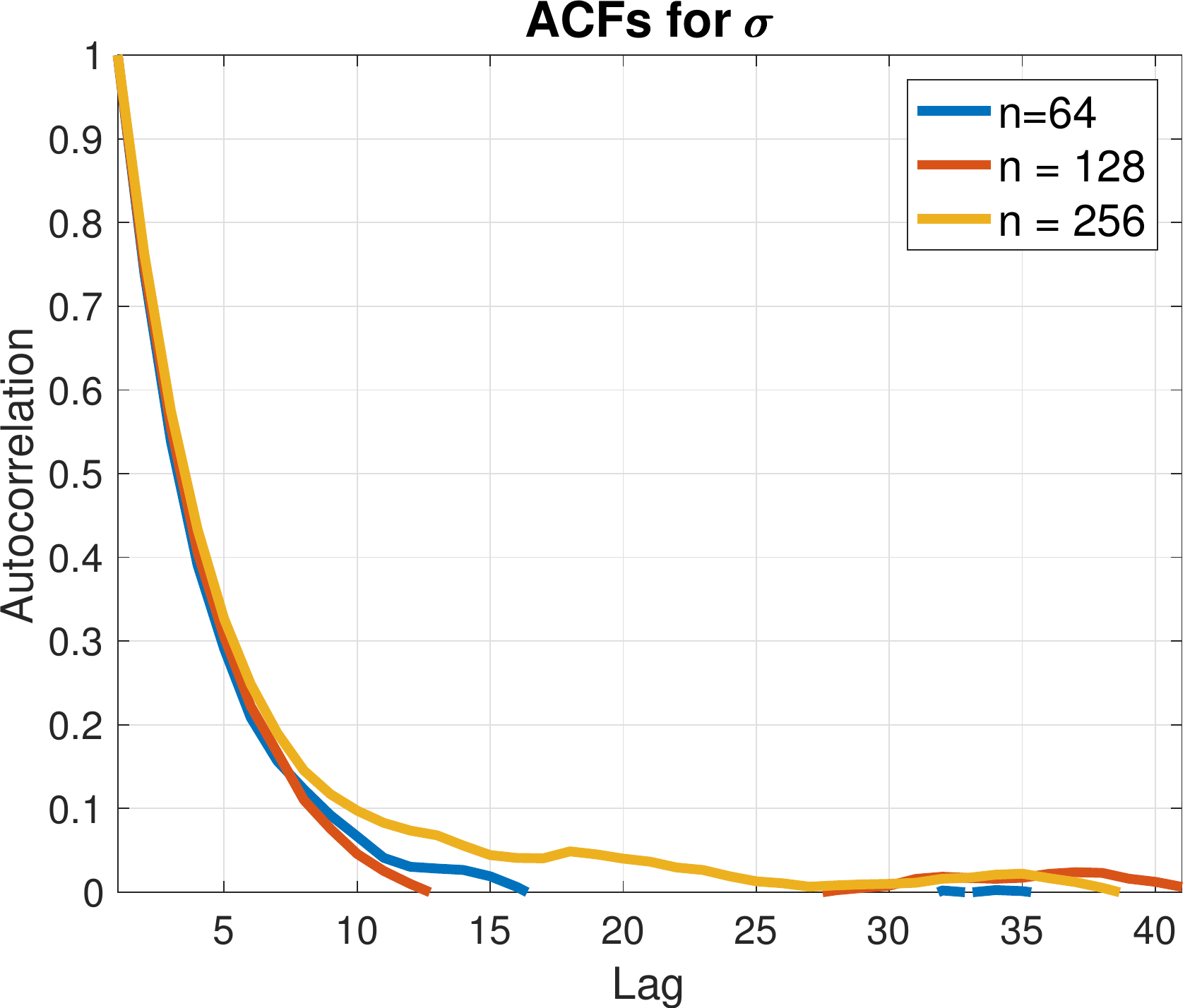}
\includegraphics[scale=0.4]{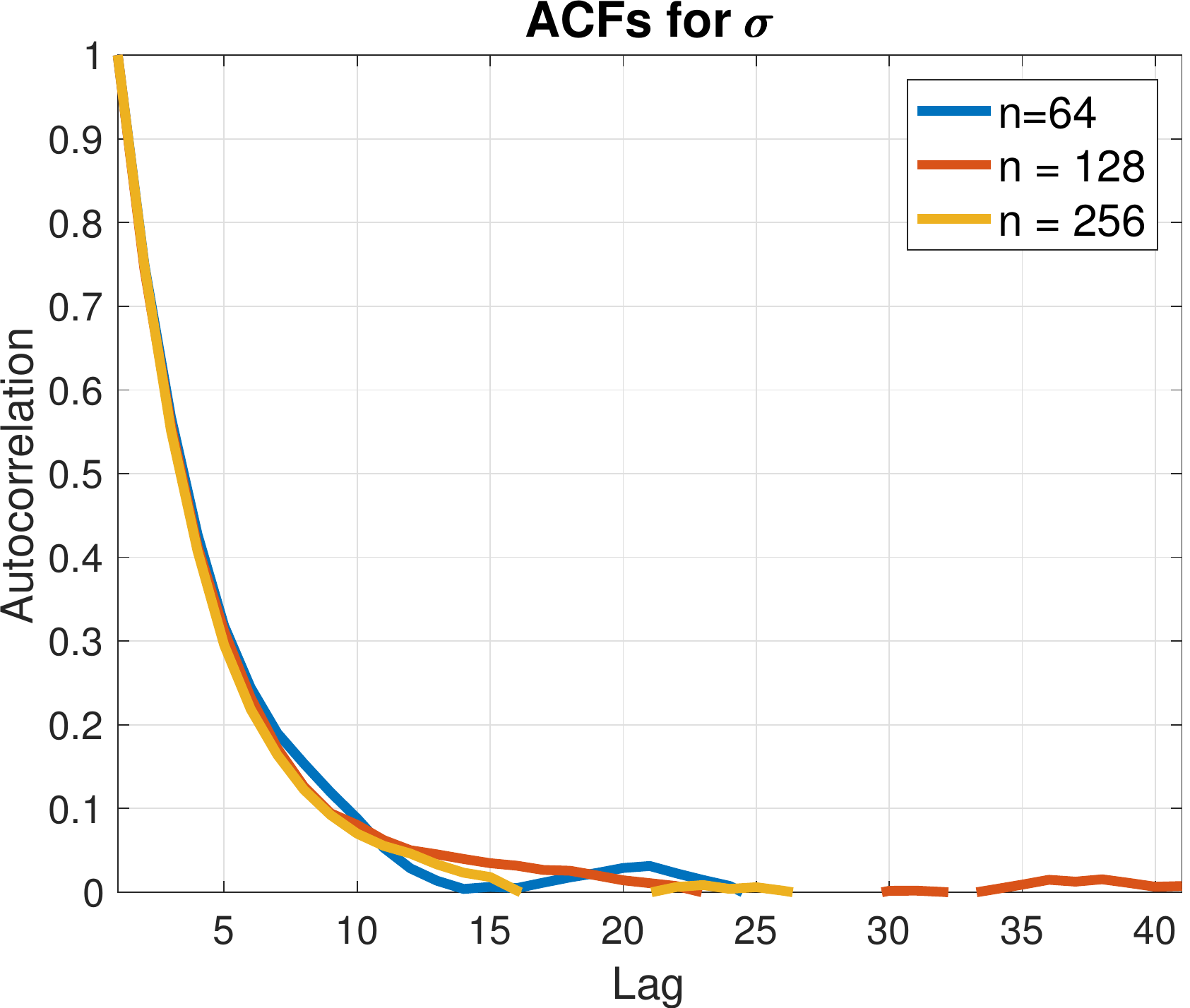}
\caption{Plots of the empirical autocorrelation functions for the $\sigma$ chains in the $1$-D deblurring example: (left) ABDA and (right) PM with $K=5$. }
\label{fig:ACF2}
\end{figure}


In~\cref{tab:1dsum}, we compare the various methods proposed here. The ESS of the PM algorithm is  marginally better than ABDA or AOB. This is because PM has a lower IACT. This is not surprising since the PM uses additional samples to marginalize the posterior distribution. More details regarding the PM method are given in the next experiment. Between the AOB and ABDA algorithms, AOB has a higher ESS and is more statistically efficient. In principle, though, ABDA is much less computationally expensive since the forward operator $\B{A}$ is evaluated less frequently. The difference is more pronounced in the inverse heat equation example in which the forward model is more expensive to evaluate.

\begin{table}[tb]\centering
\begin{tabular}{c|c|c}
& IACT $\sigma$ & ESS  \\ \hline
AOB & $6.97$ & $1434.61$  \\
ABDA &   $7.14$  & $1400.90$  \\
PM $K=5$ & $6.90$ & $1448.38$
\end{tabular}
\caption{Integrated autocorrelation times (IACT) and effective sample sizes (ESS) for the $\sigma$ chains produced by the proposed algorithms for the $1$-D deblurring example.}
\label{tab:1dsum}
\end{table}

\paragraph{Effect of target rank} Here we investigate the effect of the target rank on the acceptance rate and the statistical efficiencies of the samplers. The model is identical to that in the first experiment with the exception that we change the target rank $k$ that controls the accuracy of the approximate posterior distribution. We vary the target rank $k$ from $20$ to $40$ in increments of $5$. The results are displayed in~\cref{tab:1drank}. When the target rank $k$ is below $25$, we find that the acceptance rate is very close to zero, indicating that the low-rank approximation is not sufficiently accurate. On the other hand, when the target rank $k$ is $25$ and above, the acceptance rate is close to $33\%$. It is also seen that the increasing the target rank does not substantially increase the acceptance rate. {This suggests that we have successfully approximated the posterior distribution to the point that the only limitation is the proposal distribution for $\B\theta$; effectively, the dimensionality has been reduced from $128$ to $25$ which is roughly a factor of $5$.} Similar results are observed for the ABDA algorithm as well, shown in the right panel of~\cref{tab:1drank}.

\begin{table}[tb]\centering
    \begin{subtable}{.5\linewidth}
      \centering
\begin{tabular}{c|c|c}
Rank & ESS & A.R. \\ \hline
$20$ & $259.38$ & $0.1123$  \\
$25$ & $1353.71$ & $0.3318$  \\
$30$ & $1480.83$ & $0.3383$ \\
$35$ & $1357.73$ & $0.3396$ \\
$40$& $1357.73$ & $0.3396$
\end{tabular}
\caption{AOB}
\end{subtable}%
\begin{subtable}{.5\linewidth}
      \centering
\begin{tabular}{c|c|c|c}
Rank & ESS & A.R. 1 & A.R. 2\\ \hline
$20$ & $7.67$ & $0.0048$ & $0.1465$  \\
$25$ & $1056.31$ & $0.4988$ & $0.7017$   \\
$30$ & $1326.67$ & $0.5024$ & $0.7242$\\
$35$ & $1139.74$ & $0.4997$ & $0.7264$\\
$40$& $1139.74$ & $0.4997$ & $0.7264$
\end{tabular}
\caption{ABDA}
 \end{subtable}
    \caption{Summary of the effect of increasing rank for the $1$-D deblurring example. `A.R.' refers to acceptance rate; the number as a suffix refers to the stage. In general, an increase in the target rank increases the acceptance ratio and the ESS. }
\label{tab:1drank}
\end{table}
\paragraph{Effect of increased importance sample size} In this experiment, we explore the effect of the importance sample size $K$ on the statistical efficiency in the PM algorithm. In addition to the ESS and the IACT of the $\sigma$-chain, we report the CPU time in seconds, and the Computational cost per Effective Sample (CES), which is the ratio of the CPU time to the ESS.~\cref{p_pseudo} shows that there are two ways to make the PM algorithm closer to one-block: by increasing the target rank that defines the approximate distribution, or by increasing the number of samples $K$. We take the target rank to be $k=20$. From the previous experiment it is seen that the samples have high autocorrelation and the resulting ESS is poor.~\cref{tab:pm} shows that by increasing the number of samples $K$, the PM approach seems to be performing better as evident in the increase in the ESS. However, the CPU time also increases considerably since each step in the algorithm is more expensive, with increasing $K$. Indeed, when $K$ increases by a factor of $10$, the ESS merely doubles. On the other hand, from~\cref{tab:1drank} (corresponding to PM with $K=1$), increasing the target rank $k$ by $10$ substantially increases the ESS, but is far less expensive. However, when the spectrum of the prior-preconditioned Hessian is flat, increasing the target rank may not improve the ESS. On the other hand,~\cref{p_pseudo} shows that increasing the sample size $K$ will have the desired effect.
\begin{table}[tb]\centering
\begin{tabular}{c|c|c|c|c}
$K $& ESS &  IACT $\sigma$ & CPU time [s] & CES \\ \hline
$1$ & $282.49$ &  $32.00$ & $43.62$ & $0.1544$\\
$5$ & $657.52$ &  $12.29$ & $54.56$ &  $0.0830$\\
$10$  & $1019.37$  & $9.81$ & $65.87$ &  $0.0646$\\
$50$  & $1329.79$  & $7.52$ & $154.52$ & $0.1162$
\end{tabular}
\caption{Effect of the importance sample size $K$ on the $\sigma$-chains produced by the Pseudo-marginal approach in the $1$-D deblurring example. The target rank is fixed to be $k=20$.}
\label{tab:pm}
\end{table}

\subsection{A PDE-based example}
\newcommand{\D}{\mathcal{D}}
In this application, we consider the inversion of the two-dimensional initial state in the heat equation.
Given the initial state $u_0$, we solve
\[
    \begin{aligned}
    u_t - \kappa \Delta u &= 0, \quad \text{in } \D \times [0, T], \\
                        u(x, t) &= 0, \quad \text{on } \partial \D \times [0, T],\\
                        u(x, 0) &= u_0, \quad \text{in } \D.
    \end{aligned}
\]
The domain is taken to be $\D = [0, 2]\times [0, 1]$.  In the present experiment
the diffusion coefficient was set to $\kappa = .001$ and we used a final
simulation time of $T = 5$. The inverse problem seeks to use
point measurements of $u(\cdot, T)$ to reconstruct the initial state. The
assumed true initial state used to simulate the data is shown in~\cref{fig:heat_inv_setup}~(left).  We suppose that measurements are taken at an array of $512$
observation points depicted in~\cref{fig:heat_inv_setup}~(right).

 We used a
Gaussian prior with covariance operator $\mathcal{A}^{-2}$, where
$\mathcal{A}$ is the Laplacian with zero Dirichlet boundary condition.    The
data were generated by adding $10\%$ Gaussian noise to the measurements.  The
problem is discretized with a $64 \times 32$ grid. In the inverse problems, we seek to reconstruct a discretized
initial state in $\mathbb{R}^n$ with $n = 63 \times 31 = 1953$.
\begin{figure}[ht]
\includegraphics[width=.4\textwidth]{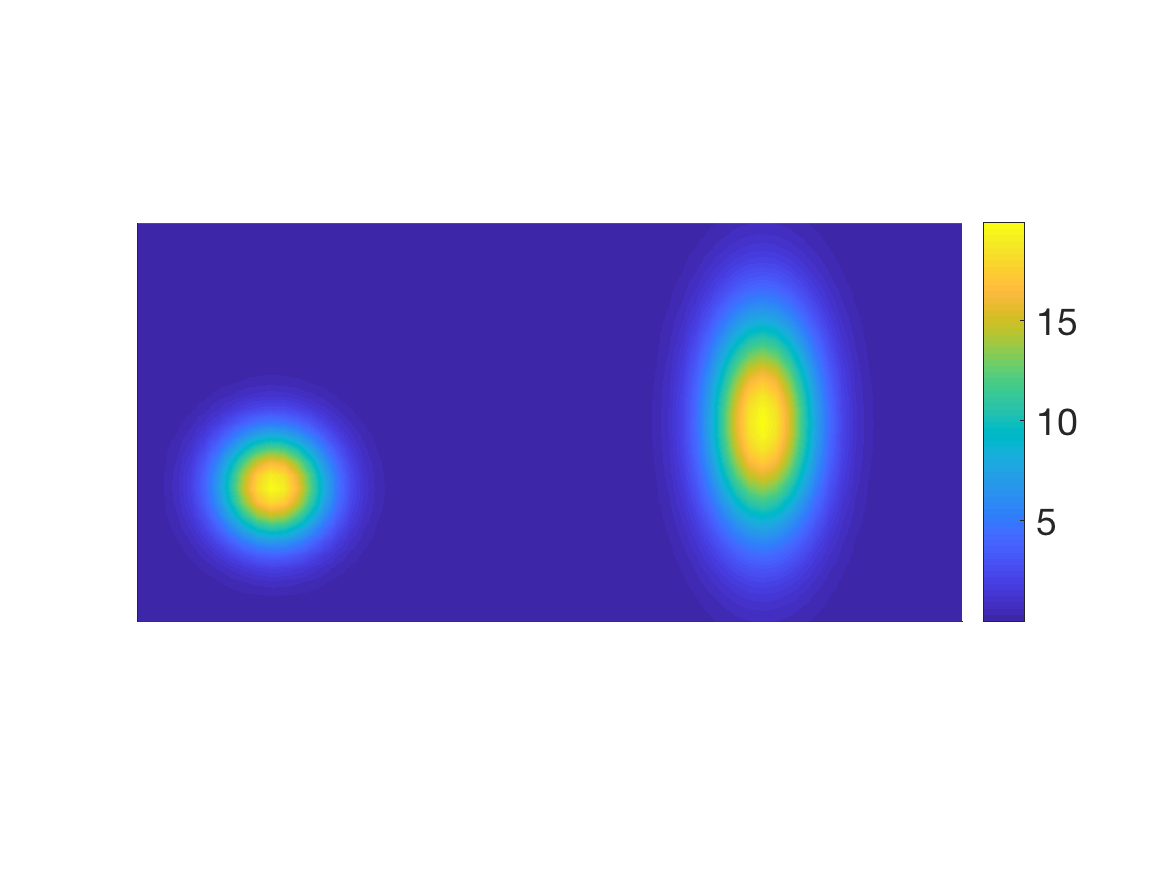}
\includegraphics[width=.4\textwidth]{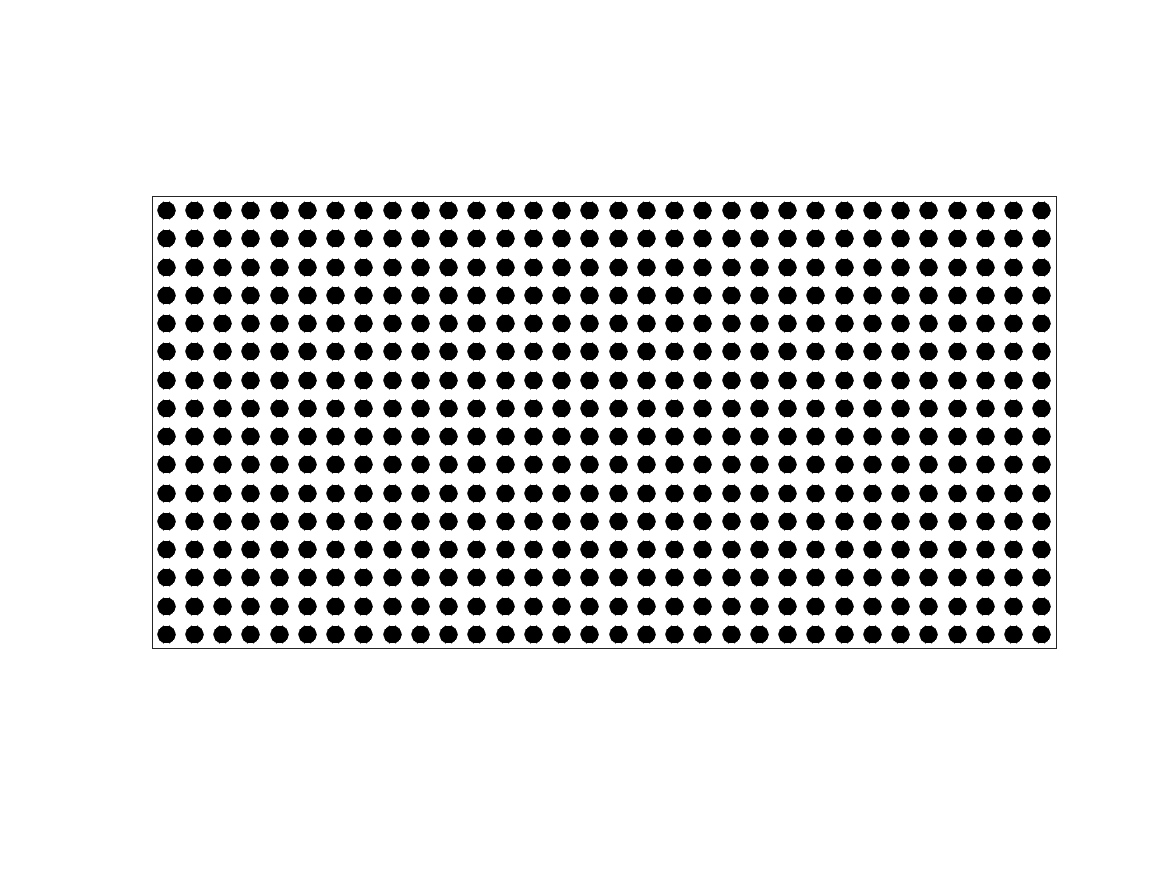}
\caption{True initial-state (left) and  observation points (right).}
\label{fig:heat_inv_setup}
\end{figure}

Let $\mathbf{A}$ denote the discretized forward operator that is obtained by composing
the PDE solution operator and the observation operator that extracts solution values at
the measurement points. Owing to the fast decay of singular values of $\mathbf{A}$, we can use
a low-rank approximation. In the present example, a rank-$70$ approximation
was found to provide sufficient accuracy.

\paragraph{Performance of AOB and ABDA}
As a first experiment, we apply the AOB method to the present inverse problem. Our MCMC implementation is run for $2\times 10^4$ iterations; we retain $10^4$ samples, and discard the rest as part of the burnin period.
In~\cref{fig:mcmc_approx_one_block}, we provide
the trace plots of the $\mu$ and $\sigma$ chains (left),
along with approximate posterior mean estimate of the initial state (right).
We also report the empirical autocorrelation functions for the $\mu$ and $\sigma$
chains in~\cref{fig:ACFs}. The acceptance rate for the algorithm was approximately 30\%.
\begin{figure}[!ht]
\begin{tabular}{cc}
\includegraphics[width=.4\textwidth]{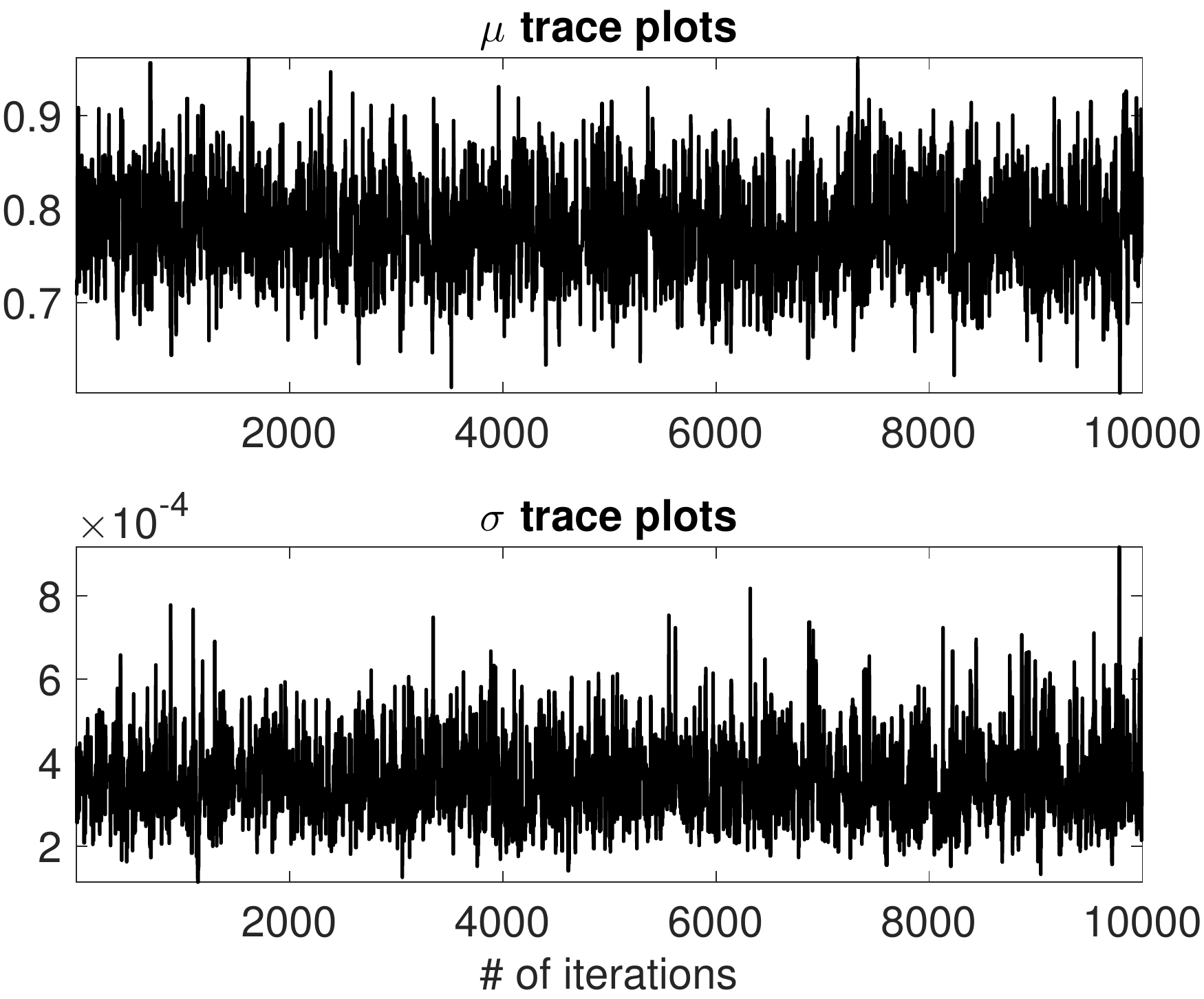}&
\includegraphics[width=.4\textwidth]{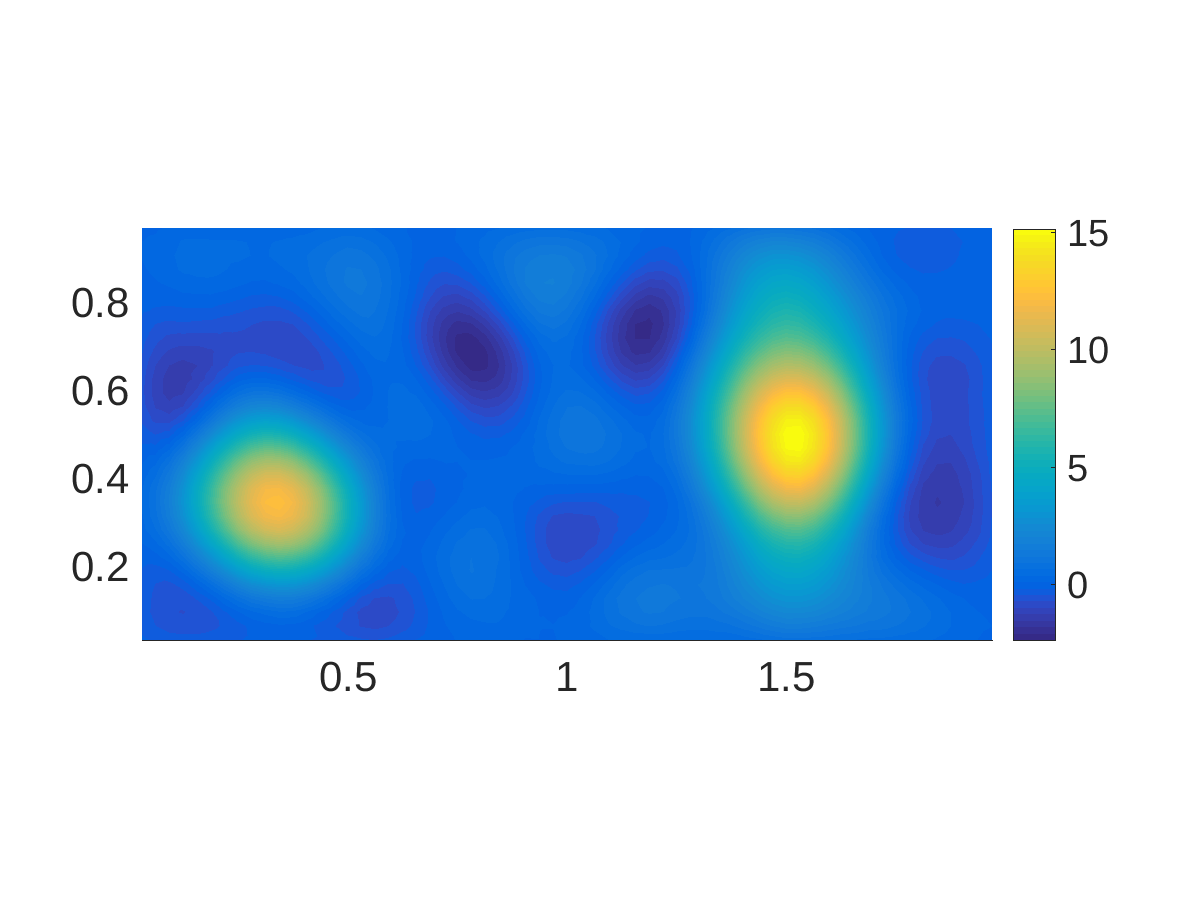}\\
\end{tabular}
\caption{AOB results for the inverse heat equation example.
Left: $\mu$ and $\sigma$ trace plots; right: Posterior mean estimate of
the initial state.
The Monte Carlo sample size was $10^4$.}
\label{fig:mcmc_approx_one_block}
\end{figure}

\begin{figure}[!ht]
\includegraphics[width=.45\textwidth]{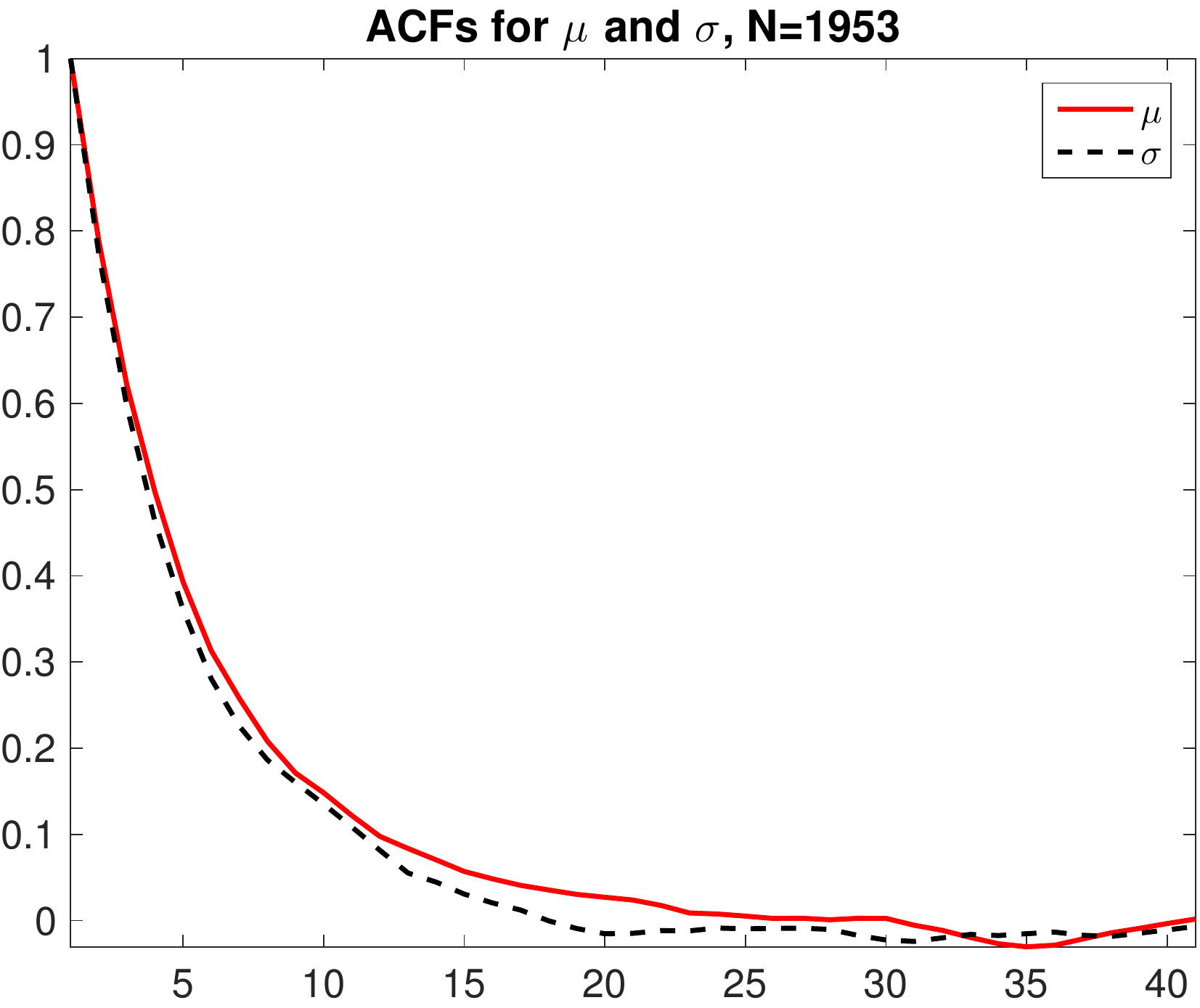}
\caption{The empirical autocorrelation function computed using AOB in the inverse heat equation example.}
\label{fig:ACFs}
\end{figure}

Next, we compare the performance of AOB with that of ABDA. The results of our
numerical experiments are summarized in \cref{fig:AOB_vs_ABDA}. We note that
both methods produce essentially the same results.
However, \cref{tbl:AOB_vs_ABDA} illustrates the reduction in computation time facilitated by ABDA. This application suggests that
ABDA may be preferable to AOB when the forward problem is computationally expensive.
\begin{figure}[!ht]
\includegraphics[width=.75\textwidth]{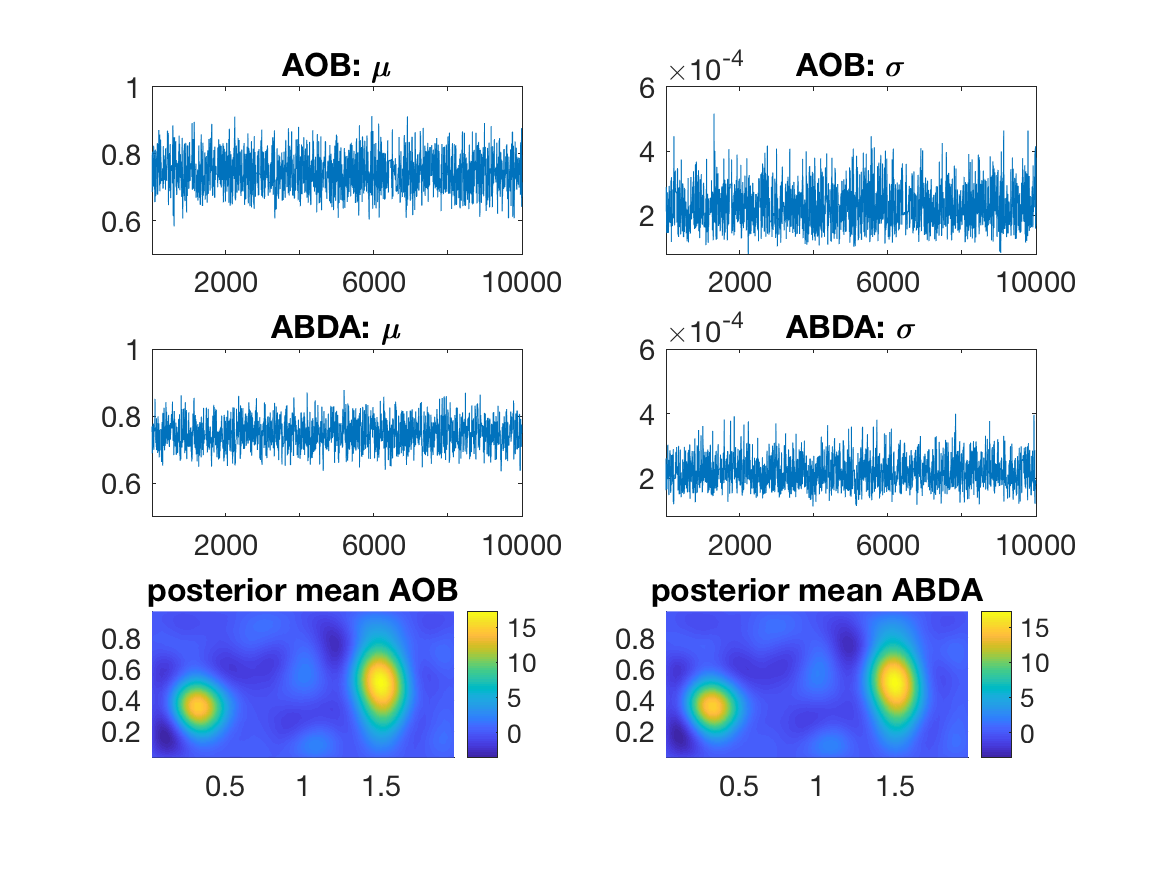}
\caption{Statistical inversion results for the
heat equation example, using AOB and ABDA methods. }
\label{fig:AOB_vs_ABDA}
\end{figure}

\begin{table}[!ht]
\begin{tabular}{l|ll}
metric & AOB & ABDA\\
\hline
time (sec) & 819.89    & 622.19 \\
ESS        & 1096.46   & 933.37\\
IACT       & 9.12 & 10.71
\end{tabular}
\caption{Sampling metrics for AOB and ABDA methods for
the inverse heat equation example.}
\label{tbl:AOB_vs_ABDA}
\end{table}

\section{Conclusions}\label{sec:conclusions}
This paper focuses on the problem of sampling from a posterior distribution that arises in hierarchical Bayesian inverse problems. We restrict ourselves to linear inverse problems with Gaussian measurement error, and we assume a Gaussian prior on the unknown quantity of interest. The hierarchical model arises when the precision parameters for both the measurement error and prior are assigned prior distributions. Gibbs sampling \cite{Bar11} and low-rank independence sampling \cite{brown2016computational} have been proposed for this type of problem, but the statistical efficiency of the resulting MCMC chains produced by either approach decreases as the dimension of the state increases. We discuss this phenomenon in detail and consider a solution using marginalization-based methods \cite{FoxNor,joyce2018point,RueHel}. Marginalization can be just as computationally prohibitive for sufficiently large-scale problems. Thus, we combine the low-rank techniques with the marginalization-based approaches to propose three MCMC algorithms that are computationally feasible for larger-scaled problems. The first of these new MCMC methods is a direct extension of the one-block algorithm \cite{RueHel}; the second is an extension of the delayed acceptance algorithm \cite{ChrFox}; and the last is an extension of the pseudo-marginal algorithm \cite{PseudoMarg}. We test and compare the performances of these three methods on two test cases, demonstrating that they all work reasonably well. We also offer suggestions on when one algorithm might be preferable over another based on, e.g., the spectrum of the prior-preconditioned Hessian matrix. {Future work will consider extensions to nonlinear forward models and non-Gaussian priors on the unknown quantity of interest.}

\section*{Acknowledgements}
This material was based upon work partially supported by the National Science Foundation under Grant DMS-1638521 to the Statistical and Applied Mathematical Sciences Institute.  The first author was also partially supported by NSF DMS-1720398. The third author was partially supported by NSF grants CMMI-1563435, EEC-1744497, and OIA-1826715. Any opinions, findings, and conclusions or recommendations expressed in this material are those of the author(s) and do not necessarily reflect the views of the National Science Foundation.
\bibliography{pcgibbs}
\bibliographystyle{abbrv}

\end{document}